\newtheorem{definition}{Definition}
\newtheorem{theorem}{Theorem}
\newtheorem{lemma}{Lemma}
\begin{document}

\title{Attack-resistant Spanning Tree Construction in Route-Restricted Overlay Networks\\
}

\author{
   \IEEEauthorblockN{Martin Byrenheid}
   \IEEEauthorblockA{
      \textit{TU Dresden}\\
      martin.byrenheid@tu-dresden.de}
   \and
   \IEEEauthorblockN{Stefanie Roos}
   \IEEEauthorblockA{
      \textit{Delft University of Technology}\\
      s.roos@tudelft.nl}
   \and
   \IEEEauthorblockN{Thorsten Strufe}
   \IEEEauthorblockA{
      \textit{TU Dresden}\\
      thorsten.strufe@tu-dresden.de}
}

\maketitle

\begin{abstract}
Nodes in route-restricted overlays have an immutable set of neighbors, explicitly specified by their users.
Popular examples include payment networks such as the Lightning network as well as social overlays such as the Dark Freenet.
Routing algorithms are central to such overlays as they enable communication between nodes that are not directly connected.
Recent results show that algorithms based on spanning trees are the most promising provably efficient choice.
However, all suggested solutions fail to address how distributed spanning tree algorithms can deal with active denial of service attacks by malicious nodes.

In this work, we design a novel self-stabilizing spanning tree construction algorithm that utilizes cryptographic signatures and prove that it reduces the set of nodes affected by active attacks.
Our simulations substantiate this theoretical result with concrete values based on real-world data sets.
In particular, our results indicate that our algorithm reduces the number of affected nodes by up to 74\% compared to state-of-the-art attack-resistant spanning tree constructions.
\end{abstract}

\section{Introduction}
\noindent 
Payment or state channel networks like Lightning~\cite{poon2016lightning} are the most promising approach to scaling blockchains, i.e., enabling blockchain-based payment systems to process tens of thousands of transactions per second with nearly instant confirmation. 
Participants in such payment networks establish channels for trading assets such as digital coins. 
As establishing channels requires use of the blockchain, which is both time- and cost-intensive, only nodes that frequently trade with each other establish payment channels~\cite{dziembowski2018general}. 
All other payments pass from a sender to the receiver via multi-hop paths of channels.  
It is essential to find these paths in an effective, efficient, and privacy-preserving manner~\cite{roos2018settling}. 

Similarly, social overlays require finding paths from a peer to another in a network consisting only of connections between trusted pairs of nodes to realize scalable and privacy-preserving distributed services~\cite{roos2016anonymous,clarke2010private}. 

Both payment channel networks and social overlays hence share the need for a routing algorithm. 
A number of promising algorithms for both networks rely on Breadth-First-Search (BFS) spanning trees~\cite{malavolta2017silentwhispers,roos2016anonymous,roos2018settling}, as these permit finding shortest paths and achieve the most efficient communication.
{The underlying spanning tree construction algorithm determines the effectiveness, efficiency, and attack resilience of the routing. 
Resistance to attacks by malicious parties who aim to prevent the tree construction from converging towards a correct spanning tree is particularly important. Preventing the construction of a correct spanning tree results in routing failures and  hence constitutes a denial-of-service attack that undermines communication. Such attacks are realistic for both payment channel networks and social overlays.
For payment channel networks, adversarial parties may undermine the routing of payments to sabotage competing operators. 
Social overlays such as Freenet aim to protect communication from censorship~\cite{clarke2010private}. 
They clearly require attack resistance against participants aiming to execute censorship in the form of a denial-of-service attack.}   

{In the context of route-restricted overlays with potentially malicious participants, spanning tree algorithms have to fulfill three requirements:}
(1) enable efficient communication by providing short paths between {honest} nodes in the spanning tree, (2) efficiently adapt to changes of the network structure, 
and (3) maintain high availability in the presence of {malicious} nodes that deliberately deviate from the construction protocol in order keep the network from converging.
Yet, the existing work {on spanning tree-based routing} only evaluates the first two aspects jointly, leaving protection against malicious behavior out of scope {despite the likely existence of malicious parties in both payment channel networks and social overlays}. 

In this work, we focus on achieving all three requirements jointly, giving rise to two key contributions: 
\begin{itemize}
   \item We present a self-stabilizing algorithm for the computation of a BFS spanning tree that uses cryptographic signatures to check the integrity of statements about the distance to the root node. We prove that the fraction of nodes reaching a stable, non-compromised state is higher than in state-of-the-art protocols.
   \item We present results from an extensive simulation study based on real-world data sets. The results demonstrate that the construction of BFS spanning trees without cryptographic measures is highly vulnerable to attacks, even if the adversary establishes just a handful of connections to honest nodes. Furthermore, we show that our algorithm {substantially} raises the necessary number of such attack connections to mislead a comparable number of nodes.
\end{itemize}


\section{Related Work}
\label{sec:related-work}
{We review the existing work for routing in route-restricted overlays to show that the design of attack-resistant spanning trees is indeed the key problem to solve. Afterwards, we consider the existing work on attack-resistant spanning tree constructions, which we then improve upon in the following sections.}
\subsection{Routing in route-restricted overlays}
{We define an \emph{overlay network}, or just overlay, as a network between multiple logically connected nodes that communicate via a public infrastructure such as the Internet.
In \emph{route-restricted overlays}, the logical connections between nodes are explicitly managed by their respective users and hard or even impossible to adapt to create a topology that benefits routing.}
Apart from finding existing paths between nodes, routing algorithms have to be efficient and scalable with regard to delays for the delivery of messages, bandwidth and memory consumption to provide adequate service for large-scale peer-to-peer networks such as payment channel networks and social overlays.
Recent work~\cite{roos2016anonymous,malavolta2017silentwhispers,roos2018settling} underlines that only routing algorithms based on rooted spanning trees provide the necessary efficiency.
Other approaches either use expensive flooding for path discovery~\cite{isdal2011privacy} or setup virtual tunnels~\cite{vasserman2009,mittal2012,prihodko2016flare}, which, in face of network dynamics, require costly maintenance~\cite{roos2015impossibility}. 
{Alternatively, some payment channel networks of smaller size use source routing~\cite{poon2016lightning,sivaraman2018routing}, which requires that each node maintains a snapshot of the entire network. 
Source routing hence does not scale, as any change to the network has to be broadcast.} 

In the context of social overlays, Hoefer et al.~\cite{hoefer13greedy} suggested using greedy embeddings based on rooted spanning trees to enable efficient routing between nodes. 
The approach has later been extended to preserve the privacy of users and offer higher attack resistance~\cite{roos2016anonymous}. However, their adversarial model only considers the routing and not the construction of the underlying spanning tree, which is an orthogonal approach to the one taken in this paper. 

For payment channel networks, Malavolta et al.~\cite{malavolta2017silentwhispers} adapted Landmark Routing~\cite{tsuchiya1988landmark}, where a path between sender and receiver is determined through an intermediate node via the construction of a breadth-first-search tree rooted at the latter.
Roos et al. later on adapted the greedy embeddings to payment channel networks~\cite{roos2018settling}.
Both works aim to achieve efficiency and privacy and do not consider security.


It thus remains an open question, if and how such spanning trees can be constructed in route-restricted overlays with malicious participants. 


\subsection{Attack-resistant spanning tree construction}

In the context of self-stabilization, Dubois, Masuzawa, and Tixeuil proposed a BFS spanning tree algorithm and proved that this algorithm guarantees that all nodes, except those that are strictly closer to the adversary than to the root node, will eventually converge to a correct state~\cite{dubois2015maximum}.
While the algorithm by Dubois et al. offers provable attack resistance, it considers a computationally unbounded attacker.
Protecting against such a strong adversary disregards mechanisms such as digital signatures that can help to further decrease the number of affected nodes.

In the context of distance vector routing, which implicitly relies on BFS trees, Zapata and Asokan~\cite{zapata2002securing} proposed a protocol that utilizes hash chains to keep malicious nodes from lying about their distance from the root node.
Furthermore, their protocol employs cryptographic signatures to prevent attacks on the mechanism for the detection of routing loops.
Subsequently, Hu et al.~\cite{hu2003sead} proposed a protocol that uses hash chains both against attacks on the reported distance as well as against attacks on loop-detection, thus reducing computational overhead compared to digital signatures.
In contrast to the work of Dubois et al., both approaches assume a computationally bounded attacker. 
However, they do not provide a formal proof of their security guarantees. 
 
In summary, there exists no provably secure BFS tree construction algorithm under the assumption of a computationally bounded attacker. We expect that such an algorithm can provide protection to a larger set of nodes than the existing information theoretically secure algorithms.

\section{Model and Notation}
\label{sec:model}
\noindent We now formalize route-restricted overlays as well as the problem of computing a breadth-first-search tree in the context of self-stabilization. 

\subsection{System model}
\noindent We model a route-restricted overlay $S=(V,E)$ as a finite set $V$ of $n$ nodes and a set of bidirectional communication links $E \subset V \times V$.
For each node $u$, the set $N(u)=\{v \mid \{u,v\} \in E\}$ denotes the \emph{neighbors of $u$}.

We build upon the shared memory model where each pair of nodes $\{u,v\} \in E$ can communicate via shared registers $r_{uv}$ and $r_{vu}$, where $u$ is only allowed to write into $r_{uv}$ and read from $r_{vu}$.
We thus call $r_{uv}$ $u$'s \emph{output register} and $r_{vu}$ its \emph{input register}.

Please note that we use the shared memory model solely to simplify formal analysis, as it omits the modeling of message transmission.
We consider this to be reasonable, as we focus on malicious node behavior and neither link failures nor delays. 

\noindent For the computation of a BFS tree, every node $u$ holds the following elements:
\begin{itemize}
   \item $ID_u$, a fixed, globally unique ID from a set $\mathcal{ID}$,
   \item $level_u$, a non-negative integer variable denoting $u$'s current, assumed distance to the root,
   \item $pID_u$, a variable holding the $ID$ of the node that is currently considered parent, in other words, the neighbor of $u$ on the path to the root in the subgraph corresponding to the tree.
\end{itemize}
Furthermore, each communication register holds two values $ID$ and $level$ such that each output register of a node $u$ holds $u$'s $ID$ as well as its current $level$-value.
Each input register $r_{vu}$ of a node $u$ accordingly holds $u$'s current view of $v$'s $ID$ and $level$-value.
In the following, we denote the set $N_{min}(u) = \{ v \in N(u) | \forall n \in N(u): level_v \leq level_n\}$ as \emph{minimal neighbors} of $u$. Parent nodes are always minimal neighbors in BFS spanning trees. 

We refer to the values currently held by the $level$- and $pID$-variable of a node $u$ as well as the register contents, at one point in time, as the \emph{state} of $u$.
The state of $u$ is said to be \emph{legitimate} if it fulfills Def.~\ref{def:legit-state}.
\begin{definition}
   \emph{(Legitimate state)}
   Let $S=(V,E)$ be a route-restricted overlay with a distinguished root node $l \in V$ with $ID$-value $ID_L \in \mathcal{ID}$.
   The state of a node $u$ whose minimal neighbors have level $l_{min}$ is called \emph{legitimate} if it fulfills the following conditions:
   \begin{enumerate}
      \item $level_u = 0$ iff $ID_u = ID_L$
      \item $level_u = l_{min} + 1$ if $ID_u \neq ID_L$
      \item $pID_u = ID_u$ iff $ID_u = ID_L$
      \item $\exists v_{min} \in N_{min}(u): pID_u = ID_{v_{min}}$ if $ID_u \neq ID_L$
   \end{enumerate}
   \label{def:legit-state}
\end{definition}

\subsection{Adversary model}
\label{sec:attacker}
\noindent In this work, we consider adversaries who aim to perform large-scale denial of service attacks.
For payment networks, they might be competing payment network operators who want to attract more users by rendering other networks unusable.
For social overlays, the adversary might aim to weaken the privacy~\cite{borisov2007denial} or degrade utility so that users move to communication services with weaker privacy protection.

Allowing multiple adversaries to act in concert strictly increases their power.
We hence assume a single, collective adversary who controls a set $B$ of \emph{malicious} (or \emph{adversarial}) nodes and is able to set up a bounded number of connections between these malicious and \emph{honest} nodes $H$. 
The motivation for these bounds is the difficulty of large-scale social engineering that will only be successful for a subset of participants.

During an attack, each malicious node may report incorrect data to the adjacent honest nodes in order to keep them from reaching or remaining in a legitimate state.
Thus, malicious nodes may set their output registers arbitrarily and report different $ID$- and $level$-values to different neighbors.

However, we assume that the adversary does not know all honest nodes and their internal connections a priori.
Hence, he cannot choose which nodes will be malicious or which nodes will connect to malicious nodes. 
Given that social overlay and payment networks are large-scale and dynamic distributed systems with participants from a multitude of countries, we consider this assumption to be realistic.

For all practical purposes, the Dolev-Yao model, which assumes an adversary who is limited to polynomial-time attacks -- and hence unable to break secure cryptographic primitives -- has been accepted as realistic\cite{dolev83security}.
Hence, we aim for algorithms that protect against adversaries that are polynomially bounded.

\subsection{Formalization of resilience and performance}
We formalize the attack resistance of a spanning tree construction protocol via the concept of topology-aware (TA) strict stabilization~\cite{dubois2015maximum}. 
To do so, we express the state of every node in the overlay at one point in time as a \emph{configuration} $\gamma$.

Following the idea of self-stabilization, we consider that every node starts in an arbitrary state.
Thus, nodes may change their state over time to reach a legitimate state. 
The sequence of configurations $\gamma_0,\gamma_1,\dots$ is called a \emph{computation} $\Gamma$.
The transition from $\gamma_{t}$ to $\gamma_{t+1}$ is called a \emph{step} and corresponds to at least one node processing the data in its input register and writing corresponding data into its output register.

Note that self-stabilizing algorithms never terminate but repeatedly update their state and communication registers.
However, a node executing a step may not actually change the values of its variables or output registers {(e.g., because its current state is legitimate)}. 

\paragraph{Network dynamics}
Route-restricted overlays are dynamic: nodes may join and leave the system, connections between nodes are established or torn down over time.
An overlay $S=(V,E)$ changes into an overlay $S'=(V',E')$ with a potentially different network size as a consequence of such events.
According to literature, we call such changes \emph{churn events}. 
To account for the fact that computations are defined for a fixed system $S$, a churn event interrupts a computation on $S$ and starts a new computation on $S'$.

At the beginning of the new computation, all nodes in $V \cap V'$ have the same state as at the end of the computation on $S$, reflecting the fact that they cannot detect the change until they read from their registers. 
The remaining nodes in $V'$ may start in an arbitrary initial state.
In route-restricted networks, the initial state includes information about the register of neighbors, which the new node will eventually write to. 

\paragraph{Containment of attacks}
TA strict stabilization for a set $S_B \subset H$ of honest nodes denotes that every honest node $u$ except those in the set $S_B$ eventually reaches and remains in a legitimate state.
We call the set $S_B$ the \emph{containment area of $S$}, because $S_B$ (also called \emph{lost nodes}) represents the part of the network where the adversary can keep the state of nodes from converging, whereas all nodes outside of $S_B$ (called \emph{safe nodes}) will eventually reach and remain in a legitimate state.

We now formalize the concept of a node having only honest ancestors on its path to the root. 
\begin{definition}
   \emph{(Root-directed path)}
   Given a route-restricted overlay $S$ and a configuration $\gamma$, the root-directed path $P_u$ of a node $u$ is a finite sequence $v_1,v_2,\dots,v_{n+1}$ of nodes in a legitimate state such that $v_{n+1}=u$ and $pID_{v_{i+1}} = ID_{v_{i}}$ for all $1 \leq i \leq n$ and either $pID_{v_1}=ID_{v_1}$ (the legitimate root) or $v_1$ is a malicious node.
   We call $u$ \emph{ill-directed} if $v_i$ is malicious for any $1 \leq i \leq n$ and \emph{well-directed} otherwise.
   \label{def:root-path}
\end{definition}
As long as a node is ill-directed, it is subject to changes in the $level$-value reported by the adversarial node on its root-directed path.
Thus, it is not guaranteed to remain in a legitimate state.
However, an ill-directed node is not inherently a lost node, because it might eventually become well-directed as the execution proceeds.

We express the situation that a node's state has converged and remains unaffected by attacks as follows:
\begin{definition}
   \emph{(Stable state)}
   The state of a node $u$ is said to be \emph{stable} if it is legitimate and $u$ never changes its $level_u$- and $pID_u$-variable as long as no churn event occurs. In particular, actions performed by malicious nodes do not affect $u$. 
   A configuration $\gamma$ is called \emph{$S_B$-stable} if the state of every node in $V\setminus S_B$ is stable.
   \label{def:legit}
\end{definition}

We define a $S_B$-topology-aware-strictly-stabilizing ($S_B$-TA-strictly-stabilizing) algorithm as follows: 
\begin{definition}
   \emph{($S_B$-TA-strictly-stabilizing algorithm)}
   A distributed algorithm $\mathcal{A}$ is $S_B$-TA-strictly-stabilizing if and only if starting from an arbitrary configuration, every execution contains a $S_B$-stable configuration.
   \label{def:stabilizing}
\end{definition}

\paragraph{Time complexity}
To be able to reason about the time complexity that a distributed algorithm requires to reach a legitimate state, we use the concept of \emph{asynchronous rounds}.
The first \emph{asynchronous round} of a computation $\Gamma$ is the shortest prefix $\Gamma'$ of $\Gamma$ such that each node has read from and wrote to all of its registers at least once. 
The second asynchronous round then is the first asynchronous round of the computation following $\Gamma'$ and so on. In other words, the length of an asynchronous round corresponds to the maximum amount of time needed for the slowest node (regarding computational speed) to process its inputs and write the corresponding outputs.

\section{Signature-based computation of BFS trees}
\label{sec:computing}
\label{sec:attestation}
\noindent
The state-of-the-art algorithm for the construction of BFS trees proposed by Dubois et al.\cite{dubois2015maximum} ensures that all honest nodes whose distance from the closest malicious node is higher or equal than their distance from the root will eventually reach a stable state. 
As the set of nodes that do not reach a stable state is often quite large for this algorithm, we investigate algorithms that achieve a higher number of stable nodes. In contract to previous work, we assume our adversary to be computationally bounded.

In our design, each node $u$ holds a public/private key pair $p_u,s_u$ of an asymmetric cryptosystem.
The public key $p_u$ of each node $u$ is stored in the $ID$-register and the secret $s_u$ is stored in a new register called $secret_u$.
The given leader ID $ID_L$ then is the public key of the corresponding root node, implicitly choosing it as leader.
Nodes do not require global knowledge of all other nodes' keys.

\paragraph{Assumptions}
Four assumptions underlie our design:
\begin{itemize}
   \item There is an honest root node whose key is known to all nodes (e.g., bank in a payment network~\cite{malavolta2017silentwhispers}).
   \item The clocks of any pair of nodes differ at most by a globally known constant $\varDelta_C$.
   \item The time needed for one iteration of each node's main loop is bounded by a globally known constant $\varDelta_E$.
   \item The delay needed until a value written into an output register is available in the corresponding input register is bounded by a globally known constant $\varDelta_D$.
\end{itemize}
The first assumption is in accordance with the existing literature on tree-based routing in route-restricted overlays~\cite{malavolta2017silentwhispers,roos2018settling,roos2016anonymous}.
The remaining assumptions allow us to compute expiration times for the data contained in the input register of each node, thus keeping malicious nodes from reporting outdated values obtained in previous computations.

\paragraph{Level attestation}
To keep malicious nodes from lying about their distance to the root, we add a $levelAtt$-variable to each node $u$, which holds a finite sequence $P=(p_1, t_1, sig_1),(p_2, t_2, sig_2),\dots,(p_n, t_n,sig_n)$ of tuples called a \emph{level attestation}.
The elements $p_i$, $t_i$, and $sig_i$ denote a public key, a timestamp, and a cryptographic signature, respectively.
We say that such a sequence is \emph{valid for node $u$ at time $t$} if the following conditions are satisfied:
\begin{enumerate}
   \item $p_1 = ID_L$,
   \item $\forall i \in \{1,..,n\}:$ $t - t_i \leq \varDelta_C + (\varDelta_D + \varDelta_E)(n-i+1)$,
   \item $\forall i \in \{1,..,n-1\}:$ $sig_i$ is a signature over $p_{i+1} || t_i$ that is valid for $p_i$,
   \item $sig_{n}$ holds a signature over $ID_u || t_n$ that is valid for $p_n$,
\end{enumerate}
where $a||b$ denotes the concatenation of $a$ and $b$.

Condition (1) ensures that the first tuple of the attestation has indeed been generated by the root node.
Condition (2) ensures that adversarial nodes cannot use obsolete attestations (e.g., from an earlier computation) forever.
Condition (3) and (4) ensure that the signatures are computed correctly.

\paragraph{Link signatures}
Additional to the level attestation, each node assigns a randomly chosen \emph{neighbor ID} $nID_v$ to each neighbor $v$ once in the beginning of the algorithm.
During the computation, every honest node tells each neighbor its respective neighbor ID.
Whenever a neighbor of a node $u$ transmits a new level attestation, it also has to send a corresponding \emph{neighbor signature} that includes its neighbor ID assigned by $u$.
Given a valid level attestation $P$ with the last element $(p, t, sig)$ and a cryptographic hash function $h$, a neighbor signature $s$ is \emph{valid for node $u$ and neighbor $v$} if $s$ is a valid signature over $nID_v || h(P)$ for $p$.
This addendum keeps malicious nodes from sending a shortened version of an attestation received by an honest neighbor.

\paragraph{Adaptive neighbor preference}
To ensure stabilization in the case that a node has multiple neighbors that are minimal according to Def.~\ref{def:legit-state}, each node $u$ assigns a unique number between $0$ and $|N(u)|-1$ to each neighbor and chooses the minimal neighbor with the lowest number as parent.
The number of the current parent is kept in a variable $prnt$.
As the preferred neighbor may be ill-directed, the algorithm of Dubois et al.~\cite{dubois2015maximum} adaptively changes which neighbor will be preferred whenever a node changes its parent.
We implemented this strategy as follows: We add an offset counter $i_{start} \in \{0,..,|N(u)|-1\}$ such that $u$ traverses its neighbors from $i_{start}$ to $(|N(u)|-1) + i_{start} \mod |N(u)|$.
Whenever a node $u$ changes its parent from the neighbor with number $prnt$ to a neighbor with a number $prnt'$ that, counting from $i_{start}$ with wraparound, comes after $prnt$, then $u$ will set $i_{start}$ to $prnt'$, thus favoring $prnt'$ over $prnt$ in the future.
To compare nodes' positions $a$ and $b$ with regard to $i_{start}$, we say that $a \prec_{i_{start}} b$ if either i) $i_{start} \leq a < b$, ii) $b < i_{start} \leq a$ or iii) $a < b < i_{start}$.
Informally, $a \prec_{i_{start}} b$ indicates that $b$ be will be reached later than $a$ when counting from $i_{start}$ modulo $|N(u)|$.

\paragraph{Spanning Tree Algorithm}
\newlength{\oldfloatsep} \setlength{\oldfloatsep}{\textfloatsep}
\setlength{\textfloatsep}{0pt} 
\begin{algorithm}
\small
\DontPrintSemicolon
\While{true} {
   \ForEach{$i$ in $N(u)$} {
      $lr_{iu} := \text{\textbf{read}}(r_{iu})$\;
   } $ts := \mathit{getCurrentTime()}$\;
   $i_{start} := i_{start} \mod |N(u)|$\;\label{line:istart}
   \If{$ID = ID_L$}{
      $pID := ID$\;\label{line:root1}
      $level := 0$\;\label{line:root2}
      $levelAtt := nil$\;
   }
   \Else{
       $parentFound := false$\; \label{line:process1}
    $N_{valid} := \{ i \in N(u) \mid \mathit{isValidAtt(lr_{iu}.levelAtt, lr_{iu}.level+1)} \wedge \mathit{isValidLink(lr_{iu}.levelAtt, lr_{iu}.sig_{adj})\}}$\; \label{line:valid}
          $level := \min \{lr_{iu}.level \mid i \in N_{valid}\} + 1$\; \label{line:min}
          \ForEach{$i$ in $1..|N(u)|$} {
          $j := i + i_{start} \mod |N(u)|$\; \label{line:j}
          \If{not $parentFound$ and $N(j) \in N_{valid}$ and $level = lr_{ju}.level + 1$} {
            
             \If{$prnt \prec_{i_{start}}j$} {\label{line:check}
                $i_{start} := j$\;
            }
            $prnt := j$\;
            $pID := lr_{ju}.ID$\;
            $levelAtt := lr_{ju}.levelAtt$\;
            $parentFound := true$\; \label{line:process2}
          }
       }
   }
   \ForEach{$i$ in $N(u)$} {
      $sig_{lvl} := sign(lr_{iu}.ID || ts)$\; \label{line:write1}
      $exAtt := append(levelAtt, (ID, ts, sig_{lvl}))$\;\label{line:append}
      $sig_{adj} := sign(lr_{iu}.nID || h(exAtt))$\;
      $\text{\textbf{write}}(r_{ui}) := (ID, level, exAtt, nID_i, sig_{adj})$\; \label{line:write2}
   }
}
\caption{Attestation-based spanning tree on node $u$}
   \label{alg:bfstree}
\normalsize
\end{algorithm}
Algorithm~\ref{alg:bfstree} displays the pseudocode for our spanning tree construction algorithm: Each output register of every node $u$ holds 5 elements, namely the \mbox{$ID$-} and $level$-value of $u$ as well as the $levelAtt$- and $nID$-value together with the neighbor signature $sig_{adj}$ for the corresponding neighbor.
The algorithm leverages the following cryptographic functions: The $sign$-function uses the key stored in the $secret$-register to compute a signature $sig$. The function $h$ is a cryptographic hash function. 

Every node periodically reads the content of each input register, processes the content, and writes corresponding outputs to output registers.
The leader node first ensures that its $pID$- and $level$-value are set correctly (Line~\ref{line:root1}--\ref{line:root2}). 
Subsequently, it generates a level attestation for each neighbor and writes its own $ID$ and $level$-value together with the respective $nID$-value, level attestation, and neighbor signature into the corresponding output register (Line~\ref{line:write1}--\ref{line:write2}).
Because the $levelAtt$-variable is set to $nil$, the $\mathit{append}$-operation in Line~\ref{line:append} just returns its second argument.

During the processing stage (Line~\ref{line:process1}--\ref{line:process2}), an honest non-leader node recomputes its current $pID$-, $prnt$-, $level$- and $levelAtt$-value. It first checks the validity of the received level attestations and neighbor signatures and computes the set of valid neighbors in Line~\ref{line:valid}.
The $isValidAtt$-function checks whether a given level attestation is valid, as defined above.
If the given level attestation is valid, $isValidAtt$ further checks whether the length of the attestation equals the given level value and returns false in case of a mismatch.
Given this check succeeds, the $isValidLink$-function checks if a given $sig_{adj}$-value is valid for the corresponding neighbor.
If a parent node with a valid level attestation has been chosen, the node first checks if its previous parent became either non-minimal or its attestation became invalid and if so, sets $i_{start}$ to $j$.
It is possible that $prnt$ might hold a value larger than $|N(u)|-1$ (e.g. because its former parent had this number and left the overlay).
$prnt$ will then be set to $j$ that holds a value from the range $\{0,..,|N(u)|-1\}$ (Line~\ref{line:j}).
Afterwards, it sets its $prnt$-, $pID$- and $levelAtt$-value accordingly.
Finally, the node computes the corresponding level attestation for each neighbor and writes it into the respective output register (Line~\ref{line:write1}--\ref{line:write2}).

\setlength{\textfloatsep}{\oldfloatsep}

\section{Analysis}
\label{sec:proofs}
We prove that, given an honest root node $r$, Algorithm~\ref{alg:bfstree} is $S'_B$-TA-strictly-stabilizing with 
\begin{equation}
   S'_B = \{ u \in H \mid \exists b \in B: d^B_{min}+d_S(b,u)-1 \leq d_S(r,u)\}
   \label{form:SB}
\end{equation}
where $d^B_{min}=\min_{b \in B} d_S(r,b)$.
The ``-1'' stems from the fact that a malicious node can copy the outputs of an honest neighbor into its output registers (hence pretending to be its own predecessor), thus avoiding the need to append an attestation tuple and hence increase its maximum level.

Furthermore, let $d^H_S(u,v)$ denote the length of the shortest path between $u$ and $v$ in $S$ that does not contain a malicious node.
If no such path exists, we set $d^H_S(u,v)=\infty$.
If malicious nodes repeatedly change their outputs in order to de-stabilize honest nodes, we show that our algorithm guarantees that all nodes in the set
\begin{equation}
   S'_L = \{ u \in H \mid \exists b \in B: d^B_{min}+d_S(b,u)-1 < d^H_S(r,u)\}
   \label{form:SL}
\end{equation}
eventually reach a stable state. 
Informally, we show that $S'_L \subset S'_B$ is the containment area for an adversary that focuses on disrupting convergence by changing its behavior. 
However, for an arbitrary adversary aiming to maximize the fraction of ill-directed nodes, we achieve only a smaller containment area of $S'_B$.

Since the system starts in an arbitrary state, a malicious node may initially hold a level attestation that is valid but for which no corresponding path in the overlay exists.
We hence say that a level attestation $(p_1, t_1, sig_1), \ldots , (p_n, t_n, sig_n)$ is \emph{consistent for node $u$} if it is invalid or if there exists a path $v_1, \ldots, v_n$ in the system such that (1) $p_i$ is the public key of $v_i$ for all $1 \leq i \leq n$ and (2) $u$ either is a neighbor of $v_n$ or both $u$ and $v_n$ are neighbors of a malicious node $b$.
Otherwise, we say that the attestation is \emph{inconsistent}.
A configuration is called consistent if the $levelAtt$-values as well as the in- and output-registers of all nodes only contain consistent level attestations.

In the following, we assume that at the beginning of a computation at time $t$, all timestamps of every inconsistent attestation are at most $t + \varDelta_C$.
We consider this to be reasonable since $t + \varDelta_C$ is the highest value that a honest node (including the root) may use as timestamp and thus a malicious node cannot have a valid attestation with a higher timestamp from a previous computation.
As a consequence, every inconsistent attestation of length $n$ becomes invalid after at most $\varDelta_C + (\varDelta_D + \varDelta_E)n$ time units. 
So, every route-restricted overlay $S$ with diameter $diam(S)$ reaches a consistent configuration after at most $\varDelta_C + (\varDelta_D + \varDelta_E)diam(S)$ time units.

\subsection{Proof of $S'_B$-TA-strict stabilization}
We start the actual proof by establishing key properties of level attestation to later leverage in the proof.
In a nutshell, malicious nodes can only influence keys that are used after the $d_{min}^B$-th element of a valid and consistent level attestation $P$ but before the $|P|-d^B_{u,min}$-th element with $d^B_{u,min}=\min_{b \in B}\{d_S(u,b)\}$.
Based on this result, we can then show that a node is well-directed if their $levelAtt$-value is of length less than $d^B_{min}+d^B_{u,min}-1$.
Convergence to a stable state for all nodes in $S'_B$ follows from the fact that the system at some point reaches a state when these nodes have a valid and consistent $levelAtt$-value with minimal levels and hence will not change parents anymore.

\begin{lemma}
   Let $P=(p_1, t_1, sig_1), \ldots , (p_n, t_n, sig_n)$ be a level attestation.  
   Consider a node $u$ such that $sig_n$ is a signature over $ID_u||t_n$.
   At time $t$, we have $t-t_i\leq \Delta_C + (\Delta_D + \Delta_E) \cdot (n-i+1)$ for all $1\leq i \leq n$ and the computation has started at least $\Delta_C + (\Delta_D + \Delta_E) \cdot n$ time units before, so that $P$ is consistent for $u$.
If P is valid, then the following two statements hold:
   \begin{enumerate}
   \item For $j \leq d^B_{min}$, $p_j$ is the public key of an honest node $v$ and $d_S(v,r) < j$.
   \item For $j > n-d^B_{min, u}+1$, $p_j$ is the public key of an honest node $v$ and $d_S(v,u) \leq n-j+1$.
   \end{enumerate}   
   \label{lemma:position}
\end{lemma}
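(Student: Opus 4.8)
The plan is to prove both statements by induction along the attestation, leveraging the validity conditions together with the consistency hypothesis. The core idea is that a valid and consistent attestation of length $n$ corresponds to an actual path $v_1,\dots,v_n$ in the overlay, where each $p_i$ is the public key of $v_i$. Because the adversary is polynomially bounded, it cannot forge the signatures $sig_i$; condition (3) forces each $p_i$ to have signed $p_{i+1}||t_i$, so the only way a tuple $(p_i,t_i,sig_i)$ enters the attestation with $p_i$ the key of an honest node is if that honest node genuinely produced the signature, i.e.\ $v_{i+1}$ was its neighbor at signing time. The consistency assumption, combined with the timing bound in condition (2) and the fact that the computation has run for at least $\Delta_C+(\Delta_D+\Delta_E)\cdot n$ time units, rules out stale inconsistent attestations surviving from a previous computation.

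For statement (1), I would argue by induction on $j$ from $1$ up to $d^B_{min}$. The base case uses condition (1): $p_1=ID_L$ is the root's key, so $v_1=r$ is honest with $d_S(r,r)=0<1$. For the inductive step, suppose $p_j$ is the key of an honest node $v_j$ with $d_S(v_j,r)<j$, and that $j+1\leq d^B_{min}$. The honest node $v_j$ signed $p_{j+1}||t_j$, which (since $v_j$ is honest and its signature is unforgeable) means $v_{j+1}$ is a neighbor of $v_j$ that $v_j$ chose as part of extending the attestation; hence $v_{j+1}$ exists and $d_S(v_{j+1},r)\leq d_S(v_j,r)+1<j+1$. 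The key point is that $v_{j+1}$ must itself be honest: if $v_{j+1}$ were malicious, then $v_{j+1}\in B$ would satisfy $d_S(r,v_{j+1})<j+1\leq d^B_{min}$, contradicting the minimality defining $d^B_{min}=\min_{b\in B}d_S(r,b)$. This is where the bound $j\leq d^B_{min}$ is used essentially.

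Statement (2) is the symmetric claim anchored at $u$ rather than at $r$, and I would prove it by a descending induction on $j$ from $n$ down to $n-d^B_{min,u}+2$. The base case uses condition (4): $sig_n$ is a signature over $ID_u||t_n$ valid for $p_n$, so $p_n$ is the key of a node $v_n$ with $u$ (or, via a shared malicious neighbor, $v_n$) adjacent to $u$; consistency gives $d_S(v_n,u)\leq 1=n-n+1$. The descending step mirrors the ascending one, using that $v_j$ having signed $p_{j+1}||t_j$ forces $v_{j+1}$ to be its neighbor, so $d_S(v_j,u)\leq d_S(v_{j+1},u)+1$, and the honesty of $v_j$ again follows from the distance-to-$B$ bound: a malicious $v_j$ at this position would be too close to $u$, contradicting $d^B_{min,u}=\min_{b\in B}d_S(u,b)$.

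\textbf{Main obstacle.} The delicate part is not the induction skeleton but nailing down precisely what consistency buys us at the boundary, in particular the subtle ``$-1$''-type accounting and the case where $u$ and $v_n$ are neighbors of a common malicious node rather than directly adjacent. I expect the main work to lie in showing that an honest node's signature $sig_i$ over $p_{i+1}||t_i$ can only have been produced when $v_{i+1}$ was an actual neighbor whose own (sub)attestation the honest node validated, so that the reconstructed path is genuine; this is where the neighbor-signature mechanism and the length-equals-level check performed by $isValidAtt$ must be invoked to preclude a malicious node from splicing or truncating an honestly-signed prefix. Handling the timing condition (2) carefully—ensuring that after $\Delta_C+(\Delta_D+\Delta_E)\cdot n$ time units any attestation surviving as valid must be consistent—is the other point requiring care, since it is what licenses treating every $p_i$ below the $B$-thresholds as an honest node on a real path.
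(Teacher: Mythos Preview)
Your approach matches the paper's: forward induction on $j$ for statement (1) anchored at the root, backward induction for statement (2) anchored at $u$, with honesty at each step forced by the distance thresholds $d^B_{min}$ and $d^B_{min,u}$.

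There is one step in your inductive argument that does not go through as written. You infer ``$v_j$ signed $p_{j+1}\|t_j$ \ldots\ means $v_{j+1}$ is a neighbor of $v_j$,'' then bound $d_S(v_{j+1},r)<j+1$, and only afterwards argue that $v_{j+1}$ is honest. But an honest $v_j$, following Algorithm~1, signs whatever ID a neighbor wrote into the shared register (Line~24 signs $lr_{iu}.ID$); if that neighbor were malicious it could write an arbitrary public key, so the owner of $p_{j+1}$ need not be adjacent to $v_j$, and your distance bound on $v_{j+1}$ is unjustified at that point. The paper avoids this by reversing the order: from $d_S(v_j,r)<j\le d^B_{min}-1$ it first concludes that \emph{every neighbor} of $v_j$ lies within distance $d^B_{min}-1$ of $r$ and is therefore honest; since honest neighbors write only their own keys, $p_{j+1}$ must be the key of that honest neighbor. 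Your contradiction about $v_{j+1}\in B$ then becomes unnecessary. The same reordering is needed in the descending induction for statement (2).

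Two smaller points: the range $j>n-d^B_{min,u}+1$ is nonempty only when $d^B_{min,u}\ge 2$, which is what rules out the ``shared malicious neighbor'' clause of consistency in your base case; and the link-signature mechanism you flag as a likely obstacle is not actually used in this lemma (it enters only in Lemma~2).
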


\begin{conference}
   Lemma~\ref{lemma:position}  follows by induction on $j$ and the full proof  can be found in the extended version of the paper~\cite{byrenheid2019attack}. 
\end{conference}
\begin{arxiv}
\begin{proof}
We show the first claim by induction on $j$.
As $p_1$ always needs to be the public key of the leader and the leader is honest by assumption, the claim holds for $j=1$.
Let $1 < j \leq d^B_{min}$ and assume the claim holds for $j-1$. Then $sig_{j-1}$ is a signature over $p_j||t_{j-1}$ using the secret key $s_{j-1}$ associated with $p_{j-1}$.
By induction hypothesis, $p_{j-1}$ is the public key of an honest node $w$ with distance $d_S(w,r) < j -1 \leq d^B_{min}-1$. $d_S(w,r) < d^B_{min}-1$ implies that $w$ has only honest neighbors, which only write their own keys to its output register for $w$ to sign.

Furthermore, because $w$ itself is honest, $w$ only signs keys and timestamps that it reads from its input registers. 
Thus, for $p_j||t_{j-1}$ to be signed by $w$, $p_j$ needs to be the key of an honest neighbor $v$ of $w$.
Given that $w$'s distance to the root is less than $j-1$ by induction hypothesis, we also have $d_S(v,r)\leq d_S(w,r)+1<j$.
This proves the first claim.
Similarly, we show the second claim by induction on $j'=n-j+1$.
Note that if $d^B_{min, u}=1$, i.e., $u$ is the neighbor of a malicious node, then there is nothing to show as there is no $p_j$ such that $j > n-d^B_{min, u}+1$. So, we assume $d^B_{min, u}>1$.
For $j'=1$, we only have to consider the key $p_n$.
As $u$ is honest, it only writes its own key into output registers to be signed by neighbors.
If $d^B_{min, u}>1$, all of $u$'s neighbors are honest.
They would hence only sign $u$'s key concatenated with a timestamp with their own, meaning that any node $v$ with public key $p_n$ is indeed an honest node and $d_S(v,u)=1$. 
Consider $1<j'<d^B_{min, u}$ and assume the claim holds for $j'-1$.
Hence, $p_{n-(j'-1)+1}$ is the public key of an honest node $w$ with $d_S(w,u)\leq j'-1$.
$w$ writes its public key and a timestamp to the registers that will be read by its neighbors. As $j'-1 < d^B_{min, u} -1$, these neighbors are honest and will sign the key and timestamp with their own keys.
Hence, any public key $p_{n-j'+1}$ whose corresponding secret key has been used to sign $p_{n-(j'-1)+1}||t_{n-(j'-1)}$ belongs to an honest neighbor $v$ of $w$ with $d_S(v,u)\leq d_S(w,u)+1=j'$. 
So, the second claim follows by induction as well.
\end{proof}
\end{arxiv}
\begin{lemma}
   Let the computation have started a least $\Delta_C + (\Delta_D + \Delta_E) \cdot n$ time units before and $u\in V\setminus S'_B$ be a node with a valid $levelAtt$-value of length $n < d^B_{min}+d^B_{u,min}-1$. Then $u$ is well-directed. 
   \label{lemma:lengthLevelAtt}
\end{lemma}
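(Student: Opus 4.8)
The plan is to combine the two halves of Lemma~\ref{lemma:position} and show that, under the hypothesized length bound, they jointly certify that \emph{every} public key occurring in $u$'s attestation belongs to an honest node; I then translate this into the statement that $u$'s root-directed path contains no malicious ancestor.

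First I would cash in the running-time hypothesis. Since the computation has run for at least $\Delta_C + (\Delta_D + \Delta_E)\cdot n$ time units, every inconsistent attestation of length $n$ has by then become invalid, as argued immediately before the lemma. Hence $u$'s valid attestation $P=(p_1,t_1,sig_1),\dots,(p_n,t_n,sig_n)$ is consistent for $u$, the hypotheses of Lemma~\ref{lemma:position} are satisfied, and there is an actual path $v_1,\dots,v_n$ in $S$ with $p_i=ID_{v_i}$ whose last node is adjacent to $u$ (or shares a malicious neighbor with $u$). Writing $d^B_{u,min}=\min_{b\in B} d_S(u,b)$, I would then invoke part~(1), giving that $p_j$ is honest for all $j\le d^B_{min}$, and part~(2), giving that $p_j$ is honest for all $j> n-d^B_{u,min}+1$. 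The only positions left uncovered would be those with $d^B_{min}<j\le n-d^B_{u,min}+1$, and such a $j$ exists exactly when $n\ge d^B_{min}+d^B_{u,min}$. As $n<d^B_{min}+d^B_{u,min}-1$ by hypothesis, no such $j$ exists; in fact the prefix range $\{1,\dots,d^B_{min}\}$ and the suffix range $\{n-d^B_{u,min}+2,\dots,n\}$ \emph{overlap} (their intersection is nonempty precisely when $n<d^B_{min}+d^B_{u,min}-1$), so together they exhaust $\{1,\dots,n\}$ and every $v_i$ is honest.

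The remaining, and most delicate, step is to pass from ``all keys of the attestation are honest'' to ``$u$ is well-directed,'' i.e.\ to identify the attestation path with $u$'s chain of parent pointers. Here I would use validity condition~(4): $sig_n$ is a valid signature over $ID_u\|t_n$ under $p_n$. Because $p_n=ID_{v_n}$ is honest, only $v_n$ can have produced this signature, and an honest node signs a neighbor's $ID$ only during the write phase of Algorithm~\ref{alg:bfstree} when emitting to that very neighbor; thus $v_n$ is an honest neighbor of $u$, and the attestation $u$ currently holds was adopted from $v_n$ as its parent, giving $pID_u=ID_{v_n}$. Applying the same reasoning to the inner signatures via condition~(3) inductively yields $p_{i-1}=ID_{v_{i-1}}=pID_{v_i}$ for every $i$, so $v_1,\dots,v_n,u$ is exactly $u$'s root-directed path and all of its ancestors $v_1,\dots,v_n$ are honest. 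Therefore $u$ is well-directed.

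I expect this last induction to be the main obstacle, since it must reconcile the level attestation --- a certificate minted at possibly earlier steps --- with the parent pointers in the \emph{current} configuration. In particular I would need the timing assumptions and the expiry of stale signatures to rule out that an honest ancestor signed on behalf of a different child, and to argue that the consistency path supplied by Lemma~\ref{lemma:position} coincides with the parent chain rather than being merely some graph path carrying the same key sequence. The overlap of the two honest segments guaranteed by the \emph{strict} bound $n<d^B_{min}+d^B_{u,min}-1$ (as opposed to mere adjacency of the ranges) is exactly what lets me glue the root-side and $u$-side portions, together with their distance bounds from Lemma~\ref{lemma:position}, into a single honest root-directed path without a gap.
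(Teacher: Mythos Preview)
Your first step—using Lemma~\ref{lemma:position} to conclude that every $p_j$ belongs to an honest node $v_j$—matches the paper exactly, and your counting of the two index ranges is correct. The gap is in the second step, where you try to pass from ``all $v_j$ honest'' to ``$v_1,\dots,v_n,u$ is $u$'s root-directed path'' purely via the structure of the level-attestation signatures $sig_i$. That inference does not go through, and the tool you are missing is the \emph{neighbor (link) signature} mechanism, which you never invoke.

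Concretely: knowing that $v_n$ is honest and that $sig_n$ is $v_n$'s signature over $ID_u\|t_n$ only tells you that some neighbor of $v_n$ wrote $ID_u$ into $v_n$'s input register—and that neighbor may be malicious. It does \emph{not} tell you that $u$'s current parent is $v_n$: a malicious $w\in N(u)$ could relay $v_n$'s extended attestation to $u$. What blocks this is that $u$ also checks $isValidLink$, i.e.\ a signature by $p_n$ over $nID_w\|h(P)$; the paper argues that $v_n$ (or more generally the last honest key in a short attestation) has only honest neighbors, so no such forged link signature can reach a malicious node. The same issue recurs at the ``overlap'' index $j=d^B_{min}$: the two distance bounds you get there, $d_S(v_j,r)\le d^B_{min}-1$ and $d_S(v_j,u)\le d^B_{u,min}-1$, still allow $v_j$ to have a malicious neighbor at distance exactly $d^B_{min}$ from $r$ and $d^B_{u,min}$ from $u$, so the overlap by itself does not ``glue'' the two halves. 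The paper handles $j\neq d^B_{min}$ by the distance bounds (those $v_j$ have only honest neighbors, hence honest parents) and handles $j=d^B_{min}$ separately with the link-signature argument: any attestation short enough for a malicious $b$ to offer $v_j$ would end in a key whose owner has only honest neighbors and therefore never produces a link signature that $b$ can present. Your plan needs this case split and this argument; the timing concerns you flag are secondary.
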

\begin{proof}
Because $\Delta_C + (\Delta_D + \Delta_E) \cdot n$ time units have passed, the $levelAtt$-value of $u$ is also consistent.
By Lemma~\ref{lemma:position}, the first $d^B_{min}$ public keys have to belong to honest nodes and the last $d^B_{u,min}-1$ keys have to belong to honest nodes as well. Hence, if $n < d^B_{min}+d^B_{u,min}-1$, all keys $p_j$ have to belong to an honest node $v_j$ for $1\leq j \leq n$. Set $v_{n+1}=u$.

$u$ can only be ill-directed if at least one $v_j$ has their $pID$-value set to a key provided by a malicious node.
First, consider the case that $j < d^B_{min}$.
By Lemma~\ref{lemma:position}, $d_S(v_j,r)<d^B_{min}-1$, meaning that $v_j$ only has honest neighbors.
Honest nodes only write their own keys in the register of their neighbors, so that $v_j$ can hence only set its $pID$-value to one of their keys.
Now, consider $j > d^B_{min}$, i.e., $n-j+1< n- d^B_{min}+1 \leq d^B_{u,min}-1$.
According to Lemma~\ref{lemma:position}, $d_S(v_j, u)\leq n -j +1 < d^B_{u,min}-1$. 
Again, $v_j$ has only honest neighbors and can hence only set its $pID$-value to one of their keys.

It remains to consider the case $j=d^B_{min}$.
By the first part of the proof, $v_j$ is the only node that can have malicious neighbors.
Assume that $v_j$ has set its $pID$ to a malicious neighbor $b$.
For $u$'s $levelAtt$ to correspond to a valid attestation, $v_{j-1}$ has to sign $p_j||t_{j-1}$ resulting in $sig_{j-1}$, append $(p_{j-1}, t_{j-1}, sig_{j-1})$ to the attestation, and write the attestation to the register corresponding to the neighbor that wrote $p_j$ to the register. 
Because $v_{j-1}$ has only honest neighbors, the respective neighbor has to be $v_j$, the only honest node that would claim $p_j$ as its key.
So, for $u$'s $levelAtt$-value to include $(p_{j-1}, t_{j-1}, sig_{j-1})$, $v_j$ must have read the register and disseminated $(p_{j-1}, t_{j-1}, sig_{j-1})$ as part of a level attestation.
Consequently, $v_j$ is aware that $v_{j-1}$ offers a root-directed path of supposed length $j-2\leq d^B_{min}-2$.
For $v_j$ to choose a different parent, $b$ has to produce a valid attestation $\widetilde{P}=(\widetilde{p}_1, \widetilde{t}_1,\widetilde{sig}_1), \ldots , (\widetilde{p}_l, \widetilde{t}_l, \widetilde{sig}_l)$ with $l\leq j-1$ and $\widetilde{sig}_l$ being a signature over $ID_{v_j}||\widetilde{t}_l$.  Furthermore, $b$ has to ensure that the $isValidLink$-function returns $true$. The neighbor-related signature has to be signed by the secret key $\widetilde{s}_l$ corresponding to $\widetilde{p}_l$. As $b$ can not forge signatures, $\widetilde{P}$ has to be a (potentially shortened) attestation that $b$ has read from one of its input registers. For such an attestation, $\widetilde{p}_{l}$ belongs to an honest node  $w$ at distance at most $l-1$ from the root by Lemma~\ref{lemma:position}. Due to $d_S^H(w,r)\leq l-1<d^B_{min}-1$, $w$ has no malicious neighbors. By Algorithm~\ref{alg:bfstree}, $w$ only writes signatures over $nID_w || h(L)$ for some $L$ to registers of neighbors. Being honest, these neighbors do not disseminate the respective signatures. As a consequence, $b$ can not obtain the required neighbor signature and hence $v_j$ does not 
accept any attestation from $b$ as its $levelAtt$-value. 

In summary, none of the nodes $v_j$ has its $pID$-value set to a key provided by a malicious node and hence $u$ is indeed well-directed.  
\end{proof}
\begin{theorem}
   Given any route-restricted overlay $S$ with diameter $diam(S)$, a computation of Algorithm~\ref{alg:bfstree} starting from an arbitrary configuration reaches a consistent configuration after at most $\Delta_C + (\Delta_D + \Delta_E) \cdot diam(S)$ time units.
   Furthermore the computation will reach a $S'_B$-stable configuration $\gamma^*$ within at most $diam(S)+1$ additional asynchronous rounds. Thus, Algorithm~\ref{alg:bfstree} is $S'_B$-TA-strictly-stabilizing.
   \label{thm:att-selfStab}
\end{theorem}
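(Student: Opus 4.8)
The plan is to treat the two halves of the statement in turn. The bound on reaching a \emph{consistent} configuration is exactly the estimate derived in the discussion preceding Lemma~\ref{lemma:position}: every inconsistent attestation of length $n$ loses validity within $\Delta_C+(\Delta_D+\Delta_E)n$ time units, so after $\Delta_C+(\Delta_D+\Delta_E)\,diam(S)$ time units no inconsistent attestation can remain valid and the configuration is consistent. I would invoke this and then fix attention on the computation starting from the first consistent configuration, counting asynchronous rounds from there.

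For the second half I would induct on the distance $k=d_S(r,u)$ from the root, proving the invariant that by the end of the $(k{+}1)$-st additional round every \emph{safe} node $u\in H\setminus S'_B$ with $d_S(r,u)\le k$ is stable in the sense of Def.~\ref{def:legit}, with $level_u=d_S(r,u)$, a valid $levelAtt$-value of length $d_S(r,u)$, and correct contents already written to all of its output registers. The base case is the root, which by Lines~\ref{line:root1}--\ref{line:root2} permanently fixes $level=0$ and $pID=ID$ and emits an attestation of length one. Before the step I would record two structural facts for a safe node $u$: since $u\notin S'_B$ means $d_S(r,u)<d^B_{min}+d^B_{u,min}-1$, no shortest $r$--$u$ path can pass through a malicious node (such a path would have length at least $d^B_{min}+d^B_{u,min}$), so $u$ has an all-honest shortest path whose penultimate node $w$ is itself safe and at distance $k-1$, hence stable by the induction hypothesis.

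The crucial step is to rule out the adversary pulling a safe node below its true distance. Suppose $u$ reached a legitimate state with $level_u<d_S(r,u)$. Then its own $levelAtt$-value is valid of length $level_u<d_S(r,u)<d^B_{min}+d^B_{u,min}-1$, so Lemma~\ref{lemma:lengthLevelAtt} makes $u$ well-directed (Def.~\ref{def:root-path}); its all-honest root-directed path of length $level_u$ then witnesses $d_S(r,u)\le level_u$, a contradiction. Hence $level_u\ge d_S(r,u)$, while the stable predecessor $w$ supplies a valid neighbor of level $d_S(r,u)-1$, forcing $level_u=d_S(r,u)$ in Line~\ref{line:min}. Applying Lemma~\ref{lemma:lengthLevelAtt} once more (now with length exactly $d_S(r,u)<d^B_{min}+d^B_{u,min}-1$) shows $u$ is well-directed, so the parent it selects is honest and, being safe and at distance $k-1$, stable. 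Finally, once all distance-$(k-1)$ nodes are stable the set of minimal neighbors in $N_{valid}$ is a fixed set of honest nodes, so the parent choice in Lines~\ref{line:j}--\ref{line:check} returns the same parent on every subsequent step—the $i_{start}$/$prnt$ update only fires when forced and cannot cycle, as in the adaptive-preference scheme of Dubois et al.~\cite{dubois2015maximum}—and thus $level_u$ and $pID_u$ never change again, closing the induction.

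Putting the halves together, all safe nodes are stable after at most $diam(S)+1$ additional rounds, which yields the $S'_B$-stable configuration $\gamma^*$; since this argument starts from an arbitrary configuration, every execution contains an $S'_B$-stable configuration and Algorithm~\ref{alg:bfstree} is $S'_B$-TA-strictly-stabilizing by Def.~\ref{def:stabilizing}. I expect the main obstacle to be precisely the ``no lower level'' claim, i.e.\ excluding that a malicious neighbor presents a short valid attestation together with a valid neighbor signature; this is the content of Lemma~\ref{lemma:lengthLevelAtt} and is where the link signatures (not merely the level attestation) are indispensable. A secondary difficulty is the round-and-register bookkeeping—guaranteeing that a node which stabilizes has propagated correct outputs before its successors read—and the non-oscillation of parent selection, which I would settle by appeal to the adaptive-preference mechanism.
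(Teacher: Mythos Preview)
Your proposal is correct and tracks the paper's argument closely: consistency via the timestamp bound, then induction on distance from the root using Lemma~\ref{lemma:lengthLevelAtt} to force well-directedness. The organizational difference is that you fold stability into the induction, whereas the paper first proves (inductively) that every safe node becomes well-directed with a $levelAtt$ of the correct length, and only afterwards proves permanence by a separate global argument: suppose some node in $V\setminus S'_B$ is the \emph{first} to change its $pID$, observe that both its old and its hypothetical new root-directed paths lie entirely in $V\setminus S'_B$, and derive a contradiction. That device sidesteps the one step you leave implicit, namely that \emph{every} neighbor of $u$ passing the validity checks with level $k-1$ (not just the currently selected parent) must be a safe node at distance $k-1$; this is what makes your ``fixed set of honest minimal neighbors'' claim true, and it does follow from the same Lemma~\ref{lemma:position}/link-signature reasoning, but you should say so explicitly rather than defer to the adaptive-preference mechanism. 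Conversely, your contradiction ruling out $level_u<d_S(r,u)$ via the length of the root-directed path is more careful than the paper, which simply asserts the attestation length in the inductive step.
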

\begin{proof}
   Arrival at a consistent configuration follows from the assumption that the timestamps of every inconsistent level attestation do not exceed the starting time of the computation by more than $\varDelta_C$ time units, as explained at the beginning of this section.
   To prove the subsequent convergence to a $S'_B$-stable configuration, we first show that after $l+1$ rounds, all nodes $u\in V\setminus S'_B$ within distance $l$ of the root are well-directed and have valid $levelAtt$-values of length $l$. The properties from Definition~\ref{def:legit-state} follow. Last, we show that these nodes remain well-directed.

   After the first round, the root has written its information to all registers. After the second round, the neighbors of the root have processed these registers. Hence, each such neighbor $u$ will set its $levelAtt$-value to a valid attestation of length 1. If $u\in V\setminus S'_B$, the distance $d_S(u,b)\geq 2$ for any malicious node $b$ and hence by Lemma~\ref{lemma:lengthLevelAtt}, $u$ is well-directed. 
So, the claim holds for $l=1$.

Assume the claim holds for $l$, i.e., after $l+1$ rounds, all nodes $v\in V\setminus S'_B$ within distance $l$ of the root are well-directed and have valid $levelAtt$-values of $l$. They know the IDs their neighbors have assigned to them as $l>1$ indicates that they have read it from the register at least once. As a consequence, they can construct a valid attestation of length $l+1$ for each neighbor $w$ as well as the necessary signature over the neighbor ID $nID_w$. They write this information to the register $r_{vw}$.  After $l+1$ rounds, any  node $u\in V\setminus S'_B$ at distance $l+1$ from the root has read the register corresponding to its neighbors at distance $l$ to the root. As a consequence, $u$'s $levelAtt$-value is of length  $l+1$. As $u\in V\setminus S'_B$, Lemma~\ref{lemma:lengthLevelAtt} shows that $u$ is well-directed. 
It follows by induction that within $diam(S)$ rounds, all nodes $u\in V\setminus S'_B$ are well-directed. 

It remains to prove that the nodes in $V\setminus S'_B$ remain well-directed. To become ill-directed, a node  has to change its $pID$-value. Let $u$ be the first node to change its $pID$-value. According to Algorithm~\ref{alg:bfstree}, $u$ selects the parent from those neighbors that provide the shortest valid attestation and a valid neighbor signature. By assumption, $u$ breaks ties consistently, meaning $u$ only changes its parent if either i) $u$'s previous parent does not provide any valid attestation of the shortest length or provides an invalid neighbor signature, or ii) a neighbor that is not the current parent writes an attestation of a shorter length than $u$'s $levelAtt$-value to its register and the content of the register passes the two validity checks.

In order to conclude that neither i) or ii) are possible, consider the following:  
Let $v$ be $u$'s parent and note that $v\in V\setminus S'_B$ by the definition of $S'_B$ as $d_S(v,r)=d_S(u,r)-1$ and $d_S(v,b)\geq d_S(u,b)-1$ for all malicious nodes $b$. It follows recursively that all nodes on a root-directed path of $u$ are in $V\setminus S'_B$. 
Case i) would imply that a node on the root-directed path changed its parent, as honest nodes do not write invalid attestations or neighbor signatures to registers. However, such a parent change contradicts the definition of $u$ as the first node in $V\setminus S'_B$ to change its parent.  
If case ii) holds, by Lemma~\ref{lemma:position}, $u$ has to be well-directed after its parent change. Hence, its new parent $w$ is an honest node. By the above, $w$ and all nodes on the new root-directed path are in  $V\setminus S'_B$ and at least one of them has to have changed its parent for $w$ to write an attestation of a different length. Again, such a change in parent is a contradiction to the definition of $u$.  
Consequently, nodes $u\in V\setminus S'_B$ do not change their $pID$-value for the rest of the computation and remain well-directed.
\end{proof}

\subsection{Proof of stabilization for $S'_L$ under attacks}
Building upon Theorem~\ref{thm:att-selfStab}, we now show that under an attacker that frequently changes the output values of its nodes, all nodes $u$ with $d^B_{min}-d^H_S(b,u)-1 = d^H_S(r,u)$ eventually reach a stable state as well.
Our result requires the concept of a $S_B$-disturbance, a concept similar to  Dubois et al.~\cite{dubois2015maximum}'s $S_B$-disruption. 
\begin{definition}
   \emph{($S_B$-disturbance)}
   Two consecutive configurations $\gamma_0$ and $\gamma_1$ are a \emph{$S_B$-disturbance} if at least one node $u \in V\setminus S_B$ changes its $level_u$- or $pID$-variable. 
   \label{def:disturbance}
\end{definition}
In contrast to a $S_B$-disruption, a $S_B$-disturbance does not assume that all nodes in $V\setminus S_B$ have a legitimate state.

\begin{theorem}
   \label{thm:SL}
   Given any route-restricted overlay $S$ with diameter $diam(S)$ and $deg_{sum}=\sum_{u \in S'_B\setminus S'_L} |N(u)|$, a computation of Algorithm~\ref{alg:bfstree} starting from an arbitrary configuration reaches a $S'_B$-stable configuration $\gamma^+$ within at most $\Delta_C + (\Delta_D + \Delta_E) \cdot diam(S)$ time units plus at most $diam(S)+1$ asynchronous rounds.
   After reaching the configuration $\gamma^+$,  $S$ will reach a $S'_L$-stable configuration within at most ($2deg_{sum}-|S'_B\setminus S'_L|$) $S'_L$-disturbances.
\end{theorem}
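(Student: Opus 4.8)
The plan is to split the argument exactly as the statement does. The first assertion---that a computation reaches an $S'_B$-stable configuration $\gamma^+$ within $\Delta_C + (\Delta_D + \Delta_E)\cdot diam(S)$ time units plus $diam(S)+1$ asynchronous rounds---is nothing more than a restatement of Theorem~\ref{thm:att-selfStab}, so I would simply invoke it. All the real work lies in bounding the number of $S'_L$-disturbances that can still occur once $\gamma^+$ has been reached. I would first observe that in $\gamma^+$ every node in $V\setminus S'_B$ is already stable, so only nodes in $S'_B\setminus S'_L$ can change their $level$- or $pID$-variable; hence every $S'_L$-disturbance is witnessed by at least one such node changing state, and it suffices to bound the total number of state changes of nodes in $S'_B\setminus S'_L$.

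Next I would pin down the geometry of these nodes. For $u\in S'_B\setminus S'_L$ the two defining inequalities, together with the trivial bound $d_S(r,u)\le d^H_S(r,u)$, force $d_S(r,u)=d^H_S(r,u)=:D$ and $d^B_{min}+d_S(b,u)-1\ge D$ for all $b\in B$, so $u$ is a genuine ``tie'' node at which the adversary can at best match, but never undercut, the honest BFS level $D$. A short auxiliary claim I would prove is that some shortest honest path $r=w_0,\dots,w_D=u$ lies entirely in $V\setminus S'_L$: each prefix satisfies $d^H_S(r,w_i)=i$, and from $u\notin S'_L$ together with $d_S(b,w_i)\ge d_S(b,u)-(D-i)$ one derives $d^B_{min}+d_S(b,w_i)-1\ge i$ for every $b$, whence $w_i\notin S'_L$. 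I would also note that safe nodes cannot be shortcut through $B$, so the parent $w_{D-1}$ stabilizes at level exactly $D-1$.

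With this in hand I would run an induction over the honest distance $D$. Once all nodes at honest distance $<D$ are stable, a node $u$ at honest distance $D$ has a stable honest parent $w_{D-1}$ at level $D-1$, so the shortest valid level $u$ can adopt is $D$; and since $u\notin S'_L$, Lemma~\ref{lemma:position} guarantees that no neighbor---adversarial or otherwise---can drive $u$'s level below $D$. Hence $u$'s $level$ settles at $D$, and the only remaining freedom is the choice among valid minimal neighbors, governed by the adaptive preference counter $i_{start}$. I would then show, for a single such $u$, that it changes its $pID$/$level$ at most $2|N(u)|-1$ times: the counter $i_{start}$ only ever advances, on forward parent changes where the new parent comes after the old one in the $\prec_{i_{start}}$ order, and once it reaches the first honest minimal neighbor the definition of $\prec_{i_{start}}$ makes that neighbor the first-scanned candidate, so the adversary can never lure $u$ away again and $u$ locks. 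There are therefore at most $|N(u)|$ forward moves; each can be preceded by at most one adversary-induced backward move (the adversary re-introducing a minimal candidate ahead of the current parent, which does not advance $i_{start}$), and the terminal locked position needs no backward move, giving at most $2|N(u)|-1$ changes. Summing over all $u\in S'_B\setminus S'_L$ yields $\sum_u (2|N(u)|-1)=2deg_{sum}-|S'_B\setminus S'_L|$, and after the last change every node in $V\setminus S'_L$ is stable, so the configuration is $S'_L$-stable.

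I expect the main obstacle to be the per-node count $2|N(u)|-1$. The subtlety is that a node's parent can change not only because of the adversary's toggling but also because a still-converging lower-layer neighbor updates its own state, and I must argue that such cascaded changes are already absorbed by the forward/backward bookkeeping of $i_{start}$ rather than resetting a node's budget. Concretely, the delicate point is ruling out a repeated ``advance $i_{start}$, then get pulled back before it'' pattern that would blow up the count; establishing that $i_{start}$ behaves monotonically enough that the honest locked position is reached after a bounded number of moves, regardless of how the adversary interleaves its withdrawals and re-offers, is where the careful case analysis of $\prec_{i_{start}}$ will be needed.
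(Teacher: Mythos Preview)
Your overall plan matches the paper's: invoke Theorem~\ref{thm:att-selfStab} for the first part, then bound the number of $S'_L$-disturbances affecting each $u\in S'_B\setminus S'_L$ individually and sum. Your geometric observations---that $d_S(r,u)=d_S^H(r,u)$ for such $u$ and that a shortest honest path to $u$ stays inside $V\setminus S'_L$---are precisely the content of the paper's Claims~1 and~2. One simplification: the paper does not induct on $D$ to pin the level. From $u\in S'_B\setminus S'_L$ one gets the equality $d^B_{min}+d^B_{u,min}-1=d_S^H(r,u)$, and then the attestation-length constraints (Lemma~\ref{lemma:position}) directly rule out any valid attestation of length below $d_S^H(r,u)$. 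Your layered induction is not wrong, but it is unnecessary and muddies the disturbance accounting, since the level of every node in $S'_B\setminus S'_L$ is already frozen at $\gamma^+$.

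The real gap is your per-node count. The claim ``each forward move can be preceded by at most one backward move'' is false. Before the \emph{first} forward move, $prnt$ and $i_{start}$ may differ, and the adversary can trigger several consecutive backward moves (e.g.\ $prnt$ sliding $3\to 2\to 1$ with $i_{start}=0$ fixed) before any forward move occurs; your pairing breaks down here. The paper's argument instead tracks two separate monotone potentials. Backward moves strictly decrease $h(prnt,i_{start})$, so there are at most $|N(u)|-1$ of them in total. As soon as a forward move happens (or backward moves drive $prnt$ down to $i_{start}$), one has $prnt=i_{start}$, and this equality is an invariant thereafter: any new parent $j\neq prnt$ then satisfies $prnt\prec_{i_{start}}j$ automatically (since $prnt=i_{start}$), so Line~\ref{line:check} fires and $i_{start}\gets j$, restoring $prnt=i_{start}$. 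Consequently all later moves are forward, and each strictly decreases $h(l,i_{start})$, giving at most $|N(u)|$ of them. The sum is $2|N(u)|-1$. Your final number is right, but your pairing argument does not establish it; you need these two potentials and the observation that the backward phase ends permanently at the first forward step.
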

\begin{conference}
   We only present a sketch of the proof here and refer the reader to the full version of the paper~\cite{byrenheid2019attack}.
   The $S'_B$-stability after at most $\varDelta_C + (\varDelta_D + \varDelta_E) diam(S)$ time units and $diam(S)+1$ asynchronous rounds follows from Theorem\ref{thm:att-selfStab}. Any subsequent $S'_L$-disturbance after reaching $\gamma^+$ can only affect nodes in $S'_B \setminus S'_L$. Let $u$ be a node affected by a $S'_L$-disturbance. Algorithm~\ref{alg:bfstree} ensures that $u$ changes its parent and the fact that $u \in S'_B \setminus S'_L$ means that $u$ has a minimal neighbor $v$ that offers a path consisting of all honest nodes to the root. After at most $|N(u)|-1$ $S'_L$-disturbances affecting $u$, $u$ starts increasing the variable 
  $i_{start}$ in Algorithm~\ref{alg:bfstree}.   
 From then on, $u$ iterates over potential parents sequentially and increases $i_{start}$ with each $S'_L$-disturbance, hence it will choose $v$ as a parent after at most 
another $|N(u)|$  $S'_L$-disturbances that affect $u$. In total, at most $2|N(u)|-1$ $S'_L$-disturbances affect $u$ before it chooses $v$ or another honest neighbor with a path of minimal length to the root. Afterwards, $u$ does not change its parent as $v$ does not change its level. 
   The claim follows by summing over all nodes $u \in S'_B\setminus S'_L$.   
\end{conference}
\begin{arxiv}
\begin{proof}
The $S'_B$-stability after $\Delta_C + (\Delta_D + \Delta_E) \cdot diam(S)$ time units plus $diam(S)+1$ asynchronous rounds follows from Theorem\ref{thm:att-selfStab}. In order to have $S'_L$-stability, all nodes in $S'_B \setminus S'_L$ have to reach a stable and legitimate state. 

Let $u \in S'_B\setminus S'_L$. The proof consists of showing the following four claims: 
\begin{enumerate}
\item $u$'s $level$-value is $level_u = d^H_S(u,r)$ for any configuration after $\gamma^+$.
\item If $u$ has a parent $v$ such that $v$ is a node on a path from $u$ to the root of length $d^H_S(u,r)$ consisting of only honest nodes, then $v \in V\setminus S'_L$ and $u$ will not change its $pID$-value in any subsequent configuration if $prnt=i_{start}$.
\item $u$ will choose such a node $v$ as a parent after at most $(2|N(u)|-1)$ $S'_L$-disturbances that affect $u$, i.e., in which $u$ changes its level or parent. 
\item The maximal number of  $S'_L$-disturbances until $u$ is in a stable and legitimate state is $2deg_{sum}-|S'_B \setminus S'_L|$. 
\end{enumerate}
By definition of $S'_B$ and $S'_L$, $u$ has at least one path consisting of only honest nodes to the root $r$. Furthermore, as $d^B_{u,min} + d^B_{min}-1=d^H_S(u,r)$, $u$ never receives a valid level attestation of length less than $d^H_S(u,r)$. So, we claim that after $diam(S)+1$ asynchronous rounds, $u$ has to have level $d^H_S(u,r)$. The previous claim obviously holds for $d^H_S(u,r)=1$ and by induction holds for all $d^H_S(u,r)$ as any honest neighbor $v$ of $u$ with  $d^H_S(v,r)=d^H_S(u,r)-1$ sends a valid level attestation to $u$.  
Hence, $u$'s $level$-value does not change and the first claim holds.

For the second claim, consider Algorithm~\ref{alg:bfstree}. $u$ always selects the minimal neighbor whose unique index is reached first. If $prnt=i_{start}$, $u$ first considers its current parent, which is $v$ (Line~\ref{line:j}). $u$ only replaces $v$ if it does not receive a valid attestation of length $d^H_S(v,r)$ and link signature from $v$.
As $v$ is honest, it does not send invalid attestations or link signatures. So, a change would only happen if $v$ changes its $level$-value. We now show that $v$ does not change its $level$-value and hence $u$ does not change its $pID$-value. 
If $v \in V\setminus S'_B$, $v$ is in a stable state and hence does not change its $level$-value. By the first part of the proof, $v \in S'_B\setminus S'_L$ also does not change its $level$-value. So, it remains to show that $v \notin S'_L$. By definition, $v$ has a path to $r$ consisting of only honest nodes and being of length $d^H_S(v,r)=d^H_S(u,r)-1$. Similarly, as $v$ is a neighbor of $u$, we have $d^B_{v,min} \geq d^B_{u,min} -1$, i.e., $v$ is at most 1 hop closer to any malicious node than $u$.
So, $d^B_{v,min} + d^B_{min}-1 \geq d^B_{u,min} + d^B_{min}-1 -1 \geq d^H_S(u,r)-1=d^H_S(v,r)$. The third step follows from Eq.~\ref{form:SL} because $u \in V\setminus S'_L$. So, $d^B_{v,min} + d^B_{min}-1 \geq d^H_S(v,r)$ and hence again by Eq.~\ref{form:SL}, $v \notin S'_L$. So, indeed, $u$ does not change its $pID$-value. 

The third claim ascertains that $u$ chooses such a $v$ as parent after at most $(2|N(u)|-1)$ $S'_L$-disturbances affecting $u$. By the above, a $S'_L$-disturbance can only affect $u$'s $pID$-value. We determine an upper bound on the number of times the $pID$-value can change until $i_{start}=prnt$ and $v$ is the parent node. Let $l$ be the local index of $v$ assigned by $u$ and 
\begin{align*}
h(m,i) = \begin{cases} 
m - i \textnormal{ if } m \geq i \\
m - i + |N(u)| \textnormal{ if } m < i
\end{cases}
\end{align*} 
As the result of a $S'_L$-disturbance, $u$'s parent changes to either $v$ or a node with pointer $prnt' \neq prnt$ with $h(prnt', i_{start})< h(l, i_{start})$. If it changes to $prnt'$, we either have $h(prnt', i_{start})< h(prnt, i_{start})$ or $h(prnt', i_{start}) > h(prnt, i_{start})$. In this first case, $i_{start}$ remains the same (Line~\ref{line:check}). However, the maximal number of consecutive decreases of the function $h(prnt, i_{start})$ is $h(l, i_{start})-1 \leq |N(u)|-1$. Once $i_{start}=prnt$, $h(prnt, i_{start})=0$. Any further change corresponds to the second case, as the condition in Line~\ref{line:check} will hold for any new parent and so $h(p, i_{start})$ continues to be $0$. In the second case, i.e., $h(prnt', i_{start}) > h(prnt, i_{start})$, $i_{start}$ is now set to $p'$, i.e., $h(l, i_{start})$ decreases. $h(l, i_{start})$ can decrease at most $|N(u)|$ times.
So, the total number of $S'_L$-disturbances until $u$ chooses $v$ as a parent are the sum of possible instance of the first and the second case, namely $|N(u)|-1+|N(u)|=2|N(u)|-1$. Furthermore, $i_{start}=prnt$ holds after these disturbances.

The fourth and last claim establishes that all nodes in $V\setminus S'_L$ are in a legitimate and stable state after at most $2deg_{sum}-|S'_B \setminus S'_L|$ disturbances. First note that each $S'_L$-disturbance has to affect a node in $S'_B\setminus S'_L$.  This is a direct consequence of the definition of $S'_L$-disturbance and the fact that $S$ is $S'_B$-stable. A $S'_L$-disturbance requires a node in $V \setminus S'_L$ to change its level or parent but nodes in $V \setminus S'_B$ are in a stable state already, so the affected node has to be in $(V \setminus S'_L) \cap S'_B = S'_B \setminus S'_L$. 
Combining the second and third claim, nodes $u \in S'_B \setminus S'_L$ are affected at most $2|N(u)|-1$ times by a $S'_L$-disturbance. So, the total number of $S'_L$-disturbances until no node in $V\setminus S'_L$ can be affected anymore is $\sum_{u \in S'_B \setminus S'_L} (2|N(u)|-1) = 2deg_{sum}-|S'_B \setminus S'_L|$.
It remains to show that all these nodes are indeed in legitimate states. By the second and third claim, all nodes in $S'_B \setminus S'_L$ have an honest parent in $V \setminus S'_L$. In addition, all nodes in $V \setminus S'_B$ have an honest parent in $V \setminus S'_L$ because of the $S'_B$-stability.
Hence, a node $u \in S'_B\setminus S'_L$ cannot have an ancestor in $B \cup S'_L$ and is hence well-directed and in a legitimate state.  Furthermore, $u$ is in a stable state by the second claim. 
 \end{proof}
\end{arxiv}

\section{Evaluation}
\label{sec:evaluation}

\noindent Using OMNeT++~\cite{omnetpp}, we implemented a simulation to evaluate the impact of our attestation-based algorithm on the number of lost nodes compared to the non-cryptographic state-of-the-art. 
Furthermore, we investigated the impact of the network structure, the position of the root node, and the placement of edges between honest and malicious nodes.

\subsection{Metrics, Data Sets, and System Parameters}
\noindent Given a distributed system $S=(V,E)$ with a subset $H$ of honest nodes and a $S_B$-TA stabilizing spanning tree construction algorithm, we measured the ratio of lost nodes (RLN) $\nicefrac{|S_B|}{|H|}$.
A low ratio of lost nodes indicates high attack resistance.
\begin{table}
\small
   \renewcommand{\arraystretch}{1.3}
   \setlength\tabcolsep{1mm}
   \caption{Structural properties of graphs used for simulation, with avg. shortest path length (CPL) and clustering coefficient (CC).}
   \label{tab:graphs}
   \centering
   \begin{tabular}{|c|c|c|c|c|}
      \hline
      \textbf {Graph} & \textbf {\# nodes} & \textbf {\# edges} & \textbf {CPL} & \textbf{CC}\\ \hline
      Facebook & 63,392 & 816,886 & 4.32 & 0.253 \\
      Ripple & 67,149 & 99,787 & 3.82 & 0.154 \\ \hline
      Randomized Facebook & \multirow{2}{*}{63,392} & 816,886 & 3.58 & 0.005\\ 
      Erdös-Renyi & & 824,096 & 3.74 & < 0.001\\ \hline
   \end{tabular}
   \vspace{-1.7em}
\normalsize
\end{table}

\begin{figure*}
   \centering
   \subfloat{
\begin{tikzpicture}[x=1pt,y=1pt]
\definecolor{fillColor}{RGB}{255,255,255}
\begin{scope}
\definecolor{drawColor}{RGB}{255,0,0}

\path[draw=drawColor,line width= 0.8pt,line join=round,line cap=round] ( 13.38, 13.51) --
	( 15.81,  9.32) --
	( 10.96,  9.32) --
	( 13.38, 13.51);
\end{scope}
\begin{scope}
\definecolor{drawColor}{RGB}{0,0,0}

\node[text=drawColor,anchor=base,inner sep=0pt, outer sep=0pt, scale=  0.80] at ( 51.02,  8.93) {without attestation};
\definecolor{drawColor}{RGB}{0,205,0}

\path[draw=drawColor,line width= 0.8pt,line join=round,line cap=round] (111.21, 10.84) -- (116.30, 10.84);

\path[draw=drawColor,line width= 0.8pt,line join=round,line cap=round] (113.76,  8.29) -- (113.76, 13.39);
\definecolor{drawColor}{RGB}{0,0,0}

\node[text=drawColor,anchor=base,inner sep=0pt, outer sep=0pt, scale=  0.80] at (147.22,  8.93) {with attestation};
\definecolor{drawColor}{RGB}{0,0,255}

\path[draw=drawColor,line width= 0.8pt,line join=round,line cap=round] (203.97,  9.04) -- (207.57, 12.64);

\path[draw=drawColor,line width= 0.8pt,line join=round,line cap=round] (203.97, 12.64) -- (207.57,  9.04);
\definecolor{drawColor}{RGB}{0,0,0}

\node[text=drawColor,anchor=base,inner sep=0pt, outer sep=0pt, scale=  0.80] at (260.14,  8.99) {honestly behaving adversary};
\end{scope}
\end{tikzpicture}%
   } \vspace{-0.8em} \\
   \subfloat{
      \input{results/means-facebook-wosn-true.tex}%
   }
   \subfloat{
      \input{results/means-facebook-random-true.tex}
   }
   \subfloat{
      \input{results/means-ER_63392_26_0-true.tex}
   }
   \subfloat{
      \input{results/means-ripple-lcc-true.tex}
   }
   \caption{Observed mean ratio of lost nodes over 100 runs per configuration for 25, 200, 1000, and 5000 attack edges under the first adversarial behavior. The bars above and below each point represent $99\%$ confidence intervals.}
   \label{fig:means}
   \vspace{-0.8em}
\end{figure*}
%
%
\begin{arxiv}
   \begin{figure*}
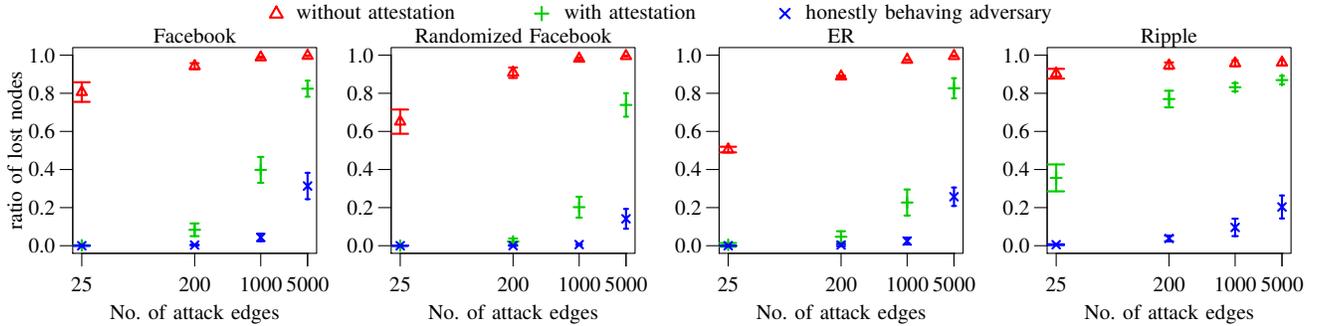

      \centering
      \subfloat{
\begin{tikzpicture}[x=1pt,y=1pt]
\definecolor{fillColor}{RGB}{255,255,255}
\begin{scope}
\definecolor{drawColor}{RGB}{255,0,0}

\path[draw=drawColor,line width= 0.8pt,line join=round,line cap=round] ( 13.38, 13.51) --
	( 15.81,  9.32) --
	( 10.96,  9.32) --
	( 13.38, 13.51);
\end{scope}
\begin{scope}
\definecolor{drawColor}{RGB}{0,0,0}

\node[text=drawColor,anchor=base,inner sep=0pt, outer sep=0pt, scale=  0.80] at ( 51.02,  8.93) {without attestation};
\definecolor{drawColor}{RGB}{0,205,0}

\path[draw=drawColor,line width= 0.8pt,line join=round,line cap=round] (111.21, 10.84) -- (116.30, 10.84);

\path[draw=drawColor,line width= 0.8pt,line join=round,line cap=round] (113.76,  8.29) -- (113.76, 13.39);
\definecolor{drawColor}{RGB}{0,0,0}

\node[text=drawColor,anchor=base,inner sep=0pt, outer sep=0pt, scale=  0.80] at (147.22,  8.93) {with attestation};
\definecolor{drawColor}{RGB}{0,0,255}

\path[draw=drawColor,line width= 0.8pt,line join=round,line cap=round] (203.97,  9.04) -- (207.57, 12.64);

\path[draw=drawColor,line width= 0.8pt,line join=round,line cap=round] (203.97, 12.64) -- (207.57,  9.04);
\definecolor{drawColor}{RGB}{0,0,0}

\node[text=drawColor,anchor=base,inner sep=0pt, outer sep=0pt, scale=  0.80] at (260.14,  8.99) {honestly behaving adversary};
\end{scope}
\end{tikzpicture}%
      } \vspace{-0.8em} \\
      \subfloat{
         \input{results/means-facebook-wosn-false.tex}%
      }
      \subfloat{
         \input{results/means-facebook-random-false.tex}
      }
      \subfloat{
         \input{results/means-ER_63392_26_0-false.tex}
      }
      \subfloat{
         \input{results/means-ripple-lcc-false.tex}
      }
      \caption{Observed mean ratio of lost nodes over 100 runs per configuration for 25, 200, 1000, and 5000 attack edges under the second adversarial behavior. The bars above and below each point represent $99\%$ confidence intervals.}
      \label{fig:means-illdirected}
      \vspace{-1.7em}
   \end{figure*}
\end{arxiv}

Route-restricted overlays include both social overlays and payment networks. 
We hence utilized a real-world graph for each of them and compare the results with synthetic graphs for the purpose of characterizing the impact of various topological features. 
\emph{Facebook} denotes a real-world graph of Facebook~\cite{viswanath2009evolution}, as used in several prior studies~\cite{mittal2012,roos2016anonymous}.
Ripple denotes a real-world graph from the Ripple payment network~\cite{roos2018settling}. Ripple has a low number of edges and a heavily skewed degree distribution: $95\%$ of all nodes have a degree less or equal than the average degree of approximately 3.
Our synthetic data sets are  i) a random synthetic network (denoted \emph{randomized Facebook}) with the same degree distribution as the Facebook graph and ii) an Erdös and Renyi graph (\emph{ER}) with approximately the same number of nodes and edges as \emph{Facebook} but normal distributed degrees~\cite{erdos1959random}. We compare \emph{Facebook} with \emph{randomized Facebook} to characterize the impact of clustering while the comparison of \emph{randomized Facebook} and \emph{ER} reveals the impact of the degree distribution. 

We considered the number of malicious nodes and the time of their presence to be unbounded but limit the total number $g$ of connections between honest nodes and malicious nodes.
To model that all nodes are colluding, we represented them as a single node with $g$ edges.

\subsection{Set-up}
\noindent We investigated the resistance of spanning tree algorithms to adversarial behavior given structural differences of the networks and a varying number $g$ of attack edges. For all scenarios, we performed 100 runs to obtain statistically significant results.

We assumed the adversary knows all nodes but can only establish a connection to a subset with limited size.
Following~\cite{boshmaf2011socialbot} we also assumed that users with many contacts are more likely to accept new requests and thus connect with a malicious node.
We hence added a single adversary $m$ to the graph and choose the $g$ honest neighbors at random, with a probability proportional to their degree.
Afterwards, a root node $r$ was chosen uniformly at random from all honest nodes and the leader ID of each honest node was set accordingly.

We executed different spanning tree constructions for various adversarial behaviors. 
The two spanning tree algorithms are Algorithm~\ref{alg:bfstree}, i.e., spanning tree construction with level attestation, and the state-of-the-art protocol by Dubois et al.~\cite{dubois2015maximum}.
The two adversarial behaviors are:
\begin{enumerate}
\item The attacker aims to prevent convergence by causing disturbances. By Theorem~\ref{thm:SL}, the set of lost nodes corresponds to $S'_L$ as defined in Eq.~\ref{form:SL}. Similarly, the set of lost nodes for the state-of-the-art protocol is $S_L = \{u \in H: d_S(u,m) < d_S(u,r)\}$~\cite{dubois2015maximum}.
\item The attacker aims to maximize the number of ill-directed nodes. In this case, the adversary always pretends to be as close to the root as possible and does not perform any disturbances. In this case, the set of lost nodes is $S'_B$ as defined in Eq.~\ref{form:SB} according to Theorem~\ref{thm:att-selfStab}. For the state-of-the-art protocol, the set of lost nodes is $S_B = \{u \in H: d_S(u,m) < d_S(u,r)\}$.
\end{enumerate}
To investigate how strongly the cheating by one level (described in Sec.~\ref{sec:proofs}) affects the number of lost nodes when Algorithm~\ref{alg:bfstree} is used, we furthermore simulated a modified adversary which does not cheat, effectively following Algorithm~\ref{alg:bfstree} correctly.
\begin{conference}
As the results for the second adversarial behavior are very similar in terms of the advantage gained by Algorithm~\ref{alg:bfstree}, we only present the results for the first adversarial behavior here. The remaining results are included in the extended version~\cite{byrenheid2019attack}.
\end{conference}

\begin{figure*}
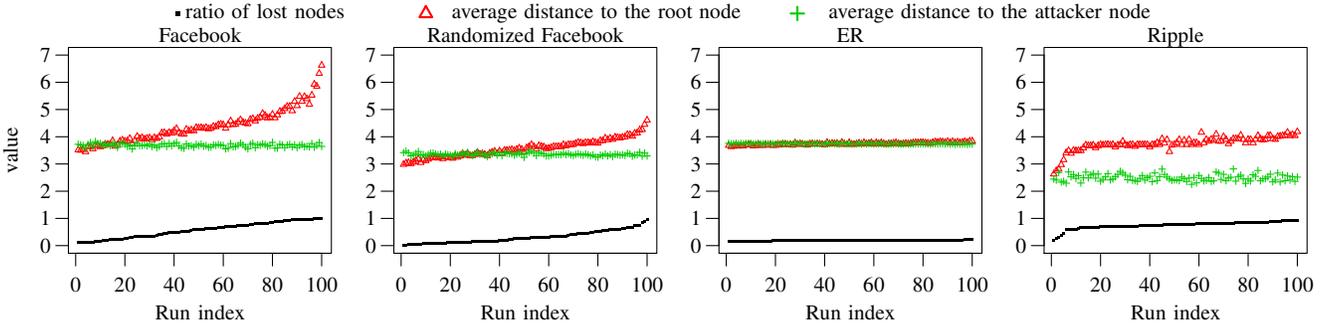

   \centering
   \subfloat{
\begin{tikzpicture}[x=1pt,y=1pt]
\definecolor{fillColor}{RGB}{255,255,255}
\begin{scope}
\definecolor{fillColor}{RGB}{0,0,0}

\path[fill=fillColor] ( 17.87,  9.85) rectangle ( 19.60, 11.58);
\end{scope}
\begin{scope}
\definecolor{drawColor}{RGB}{0,0,0}

\node[text=drawColor,anchor=base,inner sep=0pt, outer sep=0pt, scale=  0.80] at ( 51.53,  8.93) {ratio of lost nodes};
\definecolor{drawColor}{RGB}{255,0,0}

\path[draw=drawColor,line width= 0.8pt,line join=round,line cap=round] (112.42, 13.64) --
	(114.84,  9.44) --
	(110.00,  9.44) --
	(112.42, 13.64);
\definecolor{drawColor}{RGB}{0,0,0}

\node[text=drawColor,anchor=base,inner sep=0pt, outer sep=0pt, scale=  0.80] at (176.83,  8.99) {average distance to the root node};
\definecolor{drawColor}{RGB}{0,205,0}

\path[draw=drawColor,line width= 0.8pt,line join=round,line cap=round] (250.40, 10.84) -- (255.49, 10.84);

\path[draw=drawColor,line width= 0.8pt,line join=round,line cap=round] (252.94,  8.29) -- (252.94, 13.39);
\definecolor{drawColor}{RGB}{0,0,0}

\node[text=drawColor,anchor=base,inner sep=0pt, outer sep=0pt, scale=  0.80] at (325.55,  8.99) {average distance to the attacker node};
\end{scope}
\end{tikzpicture}%
   } \vspace{-0.8em} \\
   \subfloat{
      \input{results/values-trad-facebook-wosn.tex}
   }
   \subfloat{
      \input{results/values-trad-facebook-random.tex}
   }
   \subfloat{
      \input{results/values-trad-ER_63392_26_0.tex}
   }
   \subfloat{
      \input{results/values-trad-ripple-lcc.tex}
   }
   \caption{Obtained RLN values for the state-of-the-art protocol together with the average shortest path length to the root node and average shortest path length to the adversary node of the respective simulation run with 25 attack edges. The runs are ordered according to the RLN value in ascending order.}
   \label{fig:results-traditional}
      \vspace{-0.5em}
\end{figure*}
\begin{figure*}
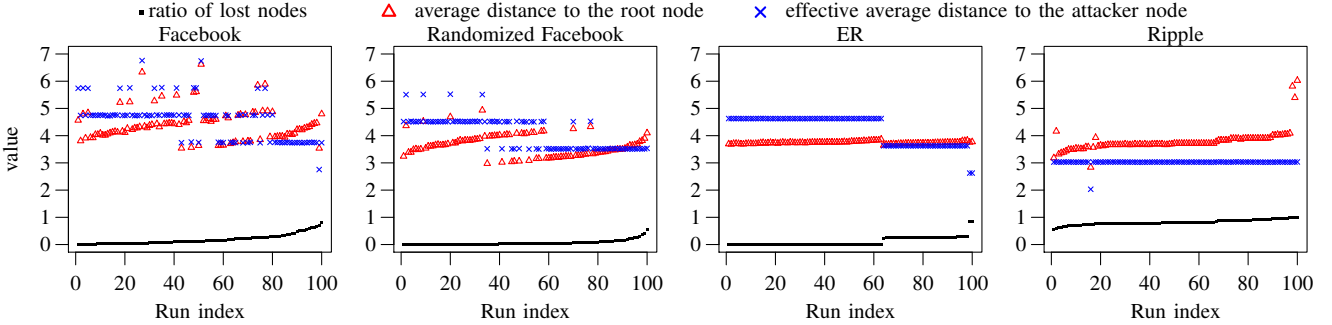

   \centering
   \subfloat{
\begin{tikzpicture}[x=1pt,y=1pt]
\definecolor{fillColor}{RGB}{255,255,255}
\begin{scope}
\definecolor{fillColor}{RGB}{0,0,0}

\path[fill=fillColor] ( 17.87,  9.85) rectangle ( 19.60, 11.58);
\end{scope}
\begin{scope}
\definecolor{drawColor}{RGB}{0,0,0}

\node[text=drawColor,anchor=base,inner sep=0pt, outer sep=0pt, scale=  0.80] at ( 51.53,  8.93) {ratio of lost nodes};
\definecolor{drawColor}{RGB}{255,0,0}

\path[draw=drawColor,line width= 0.8pt,line join=round,line cap=round] (112.42, 13.64) --
	(114.84,  9.44) --
	(110.00,  9.44) --
	(112.42, 13.64);
\definecolor{drawColor}{RGB}{0,0,0}

\node[text=drawColor,anchor=base,inner sep=0pt, outer sep=0pt, scale=  0.80] at (176.83,  8.99) {average distance to the root node};
\definecolor{drawColor}{RGB}{0,0,255}

\path[draw=drawColor,line width= 0.8pt,line join=round,line cap=round] (251.14,  9.04) -- (254.75, 12.64);

\path[draw=drawColor,line width= 0.8pt,line join=round,line cap=round] (251.14, 12.64) -- (254.75,  9.04);
\definecolor{drawColor}{RGB}{0,0,0}

\node[text=drawColor,anchor=base,inner sep=0pt, outer sep=0pt, scale=  0.80] at (338.43,  8.99) {effective average distance to the attacker node};
\end{scope}
\end{tikzpicture}%
   } \vspace{-0.8em} \\
   \subfloat{
      \input{results/values-att-facebook-wosn.tex}
   }
   \subfloat{
      \input{results/values-att-facebook-random.tex}
   }
   \subfloat{
      \input{results/values-att-ER_63392_26_0.tex}
   }
   \subfloat{
      \input{results/values-att-ripple-lcc.tex}
   }
   \caption{Results for the simulation runs of Algorithm~\ref{alg:bfstree} and 1000 attack edges. The \emph{effective average distance} to $m$ denotes the term $\overline{d}(m)+d(m,r)-1$. The runs are ordered according to the RLN value in ascending order.}
   \label{fig:results-attestation}
   \vspace{-1.7em}
\end{figure*}
 
\subsection{Impact of Level Attestation}
\noindent 
\begin{conference}
\figurename~\ref{fig:means} show the obtained mean RLN with $99\%$ confidence intervals for the four graphs and both algorithms under the first adversarial behavior.
Especially for the Facebook graph, its randomized version, and the ER graph, Algorithm~\ref{alg:bfstree} considerably reduced the ratio of lost nodes compared to the state-of-the-art protocol.
For the latter, an attack with 25 edges resulted in a mean RLN of $0.57$, $0.31$, $0.19$, and $0.75$ for the Facebook graph, the randomized Facebook graph, the ER graph and the Ripple graph, respectively.
When applying Algorithm~\ref{alg:bfstree}, the mean RLN at 25 attack edges dropped down to $0.0005$, $0.00006$, $0.00005$, and $0.34$ for the Facebook graph, the randomized Facebook graph, the ER graph, and the Ripple graph, respectively.
Even for 1000 attack edges, the mean RLN for the Facebook graph, its randomized variant, and the ER graph significantly decreased from $0.93$, $0.84$, and $0.84$ to $0.18$, $0.06$, and $0.12$, respectively.

In the scenario with an adversary that does not cheat by a level, the mean RLN was considerably lower than in the scenario with Algorithm~\ref{alg:bfstree} alone, especially with 1000 and 5000 attack edges. 
Referring to Table \ref{tab:graphs} we realize that all graphs used in our experiment have a very low average path length, and all nodes are in short distance from the root node.
Increasing the adversary's reported $level$ value by 1 then represents a significant disadvantage for the attack.
We conjecture that this causes many nodes to remain well-directed and investigate this relationship in more detail in the following.

For the Ripple graph, the improvement regarding mean RLN was considerably lower than for the other graphs.
In the following, we describe the impact of distances between honest nodes, malicious nodes, and the root node on the RLN to explain this stark difference. 
\end{conference}

\begin{arxiv}
\figurename~\ref{fig:means} show the obtained mean RLN with $99\%$ confidence intervals for the four graphs and both algorithms under the first adversarial behavior.
\figurename~\ref{fig:means-illdirected} shows the corresponding data under the second adversarial behavior.
Especially for the Facebook graph, its randomized version, and the ER graph, Algorithm~\ref{alg:bfstree} considerably reduced the ratio of lost nodes compared to the state-of-the-art protocol.
For the latter, an attack with 25 edges resulted in a mean RLN of $0.57$, $0.31$, $0.19$, and $0.75$ for the Facebook graph, the randomized Facebook graph, the ER graph and the Ripple graph, respectively, under the first adversarial behavior.
Under the second adversarial behavior, the mean RLN increased to $0.77$, $0.66$, $0.5$ and $0.91$ for the four graphs.
When applying Algorithm~\ref{alg:bfstree}, the mean RLN at 25 attack edges under the first adversarial behavior dropped down to $0.0005$, $0.00006$, $0.00005$, and $0.34$ for the Facebook graph, the randomized Facebook graph, the ER graph, and the Ripple graph, respectively.
Similarly, the mean RLN at 25 attack edges under the second adversarial behavior decreased to $0.002$, $0.0006$, $0.004$ and $0.38$ for the four graphs.
Even for 1000 attack edges, the mean RLN for the Facebook graph, its randomized variant, and the ER graph significantly decreased from $0.93$, $0.84$, and $0.84$ to $0.18$, $0.06$, and $0.12$, respectively under the first adversarial behavior.
Under the second adversarial behavior, the mean RLN at 1000 attack edges was reduced from $0.98$, $0.97$, $0.98$ and $0.95$ to $0.36$, $0.21$ and $0.23$ for the Facebook graphs and the ER graph.
In summary, while the exact numbers differ for the two adversarial behaviors, the overall result is the same: Algorithm~\ref{alg:bfstree} achieves a considerable higher number of well directed nodes than the state of the art.    

In the scenario with an adversary that does not cheat by a level, the mean RLN was considerably lower than in the scenario with Algorithm~\ref{alg:bfstree} alone, especially with 1000 and 5000 attack edges for both adversarial behaviors.
Referring to Table \ref{tab:graphs} we realize that all graphs used in our experiment have a very low average path length, and all nodes are in short distance from the root node.
Increasing the adversary's reported $level$ value by 1 then represents a significant disadvantage for the attack.
We conjecture that this causes many nodes to remain well-directed and investigate this relationship in more detail in the following.

For the Ripple graph, the improvement regarding mean RLN was considerably lower than for the other graphs.
In the following, we describe the impact of distances between honest nodes, malicious nodes, and the root node on the RLN to explain this stark difference. 
\end{arxiv}

\subsection{Impact of Network Structure}
We start with a discussion of our results for the state-of-the-art algorithm and subsequently present results for Algorithm~\ref{alg:bfstree}.
Because the correlations between the different aspects were similar for both adversarial behaviors, we focus on our results for the first adversarial behavior.

\paragraph{State-of-the-Art Spanning Tree Construction}
We considered the average hop distance over all honest nodes to the root node $\overline{d}(r)$ and to the attacker node $\overline{d}(m)$ for each simulation run.
The lower $\overline{d}(r)$ is compared to $\overline{d}(m)$, the more honest nodes will have a lower hop distance to $r$ than to $m$ and thus be well-directed. Therefore, we expected a positive correlation between $\overline{d}(r)-\overline{d}(m)$ and the RLN. 

\figurename~\ref{fig:results-traditional} shows the obtained RLN values in ascending order together with the corresponding value of $\overline{d}(r)$ and $\overline{d}(m)$ for 25 attack edges and both adversarial behaviors. 
Indeed, the difference $\overline{d}(r)-\overline{d}(m)$ generally correlated with the RLN. 
While $\overline{d}(m)$ only varied slightly between the different runs on each graph, there are notable differences in the behavior of $\overline{d}(r)$: It varied highly for the Facebook graphs and to some extent for the Ripple graph but barely for the ER graph.  

The reason for the small variance in $\overline{d}(r)$ for ER is due to the uniform probability of two nodes being connected. 
As a consequence, it was very unlikely that the average distance of any node significantly differs from the other nodes. 
The degree of $m$, corresponding to the 25 attack edges, was close to the average degree of 26.
However, the mean RLN was only $0.19$, because there was a high number of nodes whose distance to the root node equalled that to the malicious node. These nodes chose the path to the root node when the malicious node continuously causes disturbances. 
\begin{conference}
In the setting where the attacker aims to maximize the number of ill-directed nodes, approximately half of all nodes became ill-directed.
\end{conference}

For the Facebook graphs and Ripple, there is a higher variance of the root node degree and hence of the average distance to the root. The distances in the randomized graph were generally lower than in the original Facebook graph due to its lower average path length. Furthermore, $\overline{d}(r)$ correlated more strongly with the RLN, possibly due to the absence of outlier nodes with increased shortest path length.
Because of the highly skewed degree distribution of the Ripple graph, the random root node's degree was $1$ in 73 out of a 100 runs. The degree of the adversarial node, i.e., 25, was hence generally higher than the degree of the root, leading to shorter paths to the malicious nodes and hence the observed high RLN. 

\paragraph{Algorithm~\ref{alg:bfstree}}
In addition to $\overline{d}(r)$, we computed the effective average hop distance $D(m)=\overline{d}(m)+d(m,r)-1$, as $d(m,r)-1$ is the $level$-value that $m$ propagated during a simulation run. We expected a positive correlation between $\overline{d}(r)-D(m)$ and the RLN, i.e., nodes closer to the root node than $D(m)$ should be well-directed and otherwise not. 

In all runs with 25 attack edges, $D(m)$ was considerably higher than $\overline{d}(r)$ such that only a very small number of nodes became ill-directed.
\figurename~\ref{fig:results-attestation} thus shows our more distinct results for an adversary with 1000 attack edges, ordered by the RLN. The results indeed validated the expected correlation.  
Due to the high number of attack edges, the $\overline{d}(m)$ value of each run only differed slightly from its mean value of $2.75$, $2.52$, $2.63$, and $2.02$ for the Facebook graph, the randomized Facebook graph, the ER graph, and the Ripple graph, respectively. 
Thus, the values of $D(m)$ mainly depended on $d(m,r)$ and hence differed by integer values.

Again, the degree of correlation between $\overline{d}(r)-D(m)$ and the RLN varied between graphs. The Facebook graph generally had a longer average shortest path length and hence varied in $\overline{d}(r)$ considerably. In contrast, the value of $\overline{d}(r)$ was more stable for the randomized Facebook graph and the ER graph, so that $d(m,r)$ is indeed the main impact factor.  

Here, we also find the explanation for the strong difference between the mean RLN values for the simulations of Algorithm~\ref{alg:bfstree} with a cheating adversary and those of Algorithm~\ref{alg:bfstree} with a non-cheating adversary on the Ripple graph. It stems from the fact that the $d(m,r)$ value decreased very slowly as the number of attack edges increases. Concretely, the mean value of $\overline{d}(r)$ was roughly $3.8$, irrespective of the number of attack edges and the construction algorithm.
In the case of 25 attack edges, the mean value for $D(m)$ was $3.9$ and in the case of $5000$ edges, it was $2.9$, such that the $level$ value propagated by $m$ was low enough to cause a high number of nodes to become ill-directed.
As the value of $D(m)$ was increased by 1 when the adversary does not cheat, it was higher than $\overline{d}(r)$ for any considered number of attack edges, resulting in a negative $\overline{d}(r)-D(m)-1$ and hence a low impact of the attack. In contrast, $\overline{d}(r)-D(m)$ was positive, corresponding to an attack of high impact.  

\paragraph{Summary of Results}
The first part of our evaluation showed that our protocol based on cryptographic signatures is much more robust to malicious behavior and attacks than state-of-the-art solutions without the usage of cryptography.
Indeed, as displayed in \figurename~\ref{fig:means}, to compromise a similar number of nodes, the adversary needs to establish up to 200 times as many attack edges compared to the algorithm by Dubois et al.~\cite{dubois2015maximum}.

\section{Conclusion}
\label{sec:conclusion}
In this paper, we leveraged cryptographic signatures to design a BFS tree algorithm that greatly reduces the number of nodes affected by attacks.
Based on the concept of topology-aware strict stabilization, we proved that this algorithm only allows malicious nodes to report a distance to the root that differs by at most one from the correct value. 
Our evaluation based on real-world scenarios demonstrates that this novel construction provides crucial security improvements over existing, non-cryptographic algorithms.
Yet, our results indicate that the resistance to attacks is highly correlated with the degree of the root node, highlighting the need to develop secure leader election algorithms that prioritize high-degree nodes.

\section{Acknowledgements}
We thank Sebasti\'{e}n Tixeuil for shepherding our work and the reviewers for their constructive feedback.
This work has been funded by the German Research Foundation (DFG) Grant STR 1131/2-2 and EXC 2050 ``CeTI´´.
\bibliographystyle{amsplain}
\bibliography{articles}

%
%

\end{document}